%!TEX TS-program = pdflatexmk
\documentclass[11pt, a4paper,onecolumn]{belarticle4}
\pdfoutput=1
\usepackage[utf8]{inputenc}
\usepackage[english]{babel}
\usepackage[T1]{fontenc}
\usepackage{amsmath}
\usepackage{hyperref}
\usepackage{tikz}
\usepackage{lipsum}
\usepackage{dsfont}
\usepackage{subcaption}
\usepackage{xcolor}

\newcommand{\CPMTN}{\QCalc/\!\!\sim} 
\newcommand{\QPhys}{\mathbf{QPhys}}
\newcommand{\QCalc}{\mathbf{QCalc}}
\newcommand{\QRep}{\mathbf{QRep}_G}
\newcommand{\QPart}{\mathbf{QPart}_G}
\newcommand{\QUnital}{\mathbf{QPhys}|_{unital}}
\newcommand{\QSym}{\mathbf{QNeut}}

\definecolor{compositionalitypink}{RGB}{255,90,95}%compositionality pink
	\definecolor{compositionalitydarkpink}{RGB}{255,70,76}%compositionality dark pink
	\definecolor{compositionalitygray}{RGB}{85, 85, 85}  %compositionality gray
\definecolor{quantumWireviolet}{RGB}{83, 37, 127} %Quantum violet
%	\definecolor{quantumgray}{RGB}{85, 85, 85} 
	%\usepackage{natbib}
\usepackage{rotating}
\usepackage{floatpag}
\usepackage{xifthen}
%failed attempts for frowny
%\usepackage{wasysym}
%\usepackage{eepic}
%\def\sadface{{
%\begin{picture}(13,13)
%\allinethickness{1.5pt}
%\put(6.5,3.5){\circle{13}}
%\put(3.9,5.5){\circle*{1.25}}
%\put(8.7,5.5){\circle*{1.25}}
%\put(1.7,0.5){\spline(2.5,0)(5,2.5)(7.5,0)}
%\end{picture}
%}}
%\usepackage{marvosym}

\rotfloatpagestyle{empty}

\setcounter{tocdepth}{4}

% PROVIDE DEFAULT VALUES FOR SOME FLAGS

\ifthenelse{\isundefined{\showoptional}}{
  \newcommand{\showoptional}{1}
}{}

\ifthenelse{\isundefined{\ismain}}{
  \newcommand{\ismain}{0}
}{}

\ifthenelse{\isundefined{\lecturenotes}}{
  \newcommand{\lecturenotes}{0}
}{}

\usepackage{hyperref}
%\font\omding=omding

\usepackage{amsmath,amsthm,amssymb,mathabx}

\usepackage{xspace,enumerate,color,epsfig}
\usepackage{graphicx}
\graphicspath{{.}{./figures/}}

\usepackage{tikzfig}
\usepackage{stmaryrd}
\usepackage{docmute}
\usepackage{keycommand}

\input{defs.tex}
\usetikzlibrary{shapes.multipart}
\usetikzlibrary{decorations.markings}

\tikzset{->-/.style={decoration={
  markings,
  mark=at position .5 with {\arrow{>}}},postaction={decorate}}}
\tikzset{-<-/.style={decoration={
  markings,
  mark=at position .5 with {\arrow{<}}},postaction={decorate}}}
  
\tikzstyle{bwSpider}=[
       rectangle split,
       rectangle split parts=2,
       rectangle split part fill={black,white},
 minimum size=3.6 mm, inner sep=-2mm, draw=black,scale=0.5,rounded corners=0.8 mm
       ]
 \tikzstyle{wbSpider}=[
       rectangle split,
       rectangle split parts=2,
       rectangle split part fill={white,black},
 minimum size=3.6 mm, inner sep=-2mm, draw=black,scale=0.5,rounded corners=0.8 mm
       ]
\tikzstyle{oWire}=[line width = 1pt, color=black]
\tikzstyle{QWire}=[line width = 1pt, color=quantumWireviolet]
\tikzstyle{qWire}=[line width = 1pt, color=compositionalitydarkpink]
\tikzstyle{cWire}=[color=darkgray,line width = .75pt]%[densely dotted, thick]
\tikzstyle{GWire}=[color=green!60!black,line width = .75pt]
\tikzstyle{CqWire}=[line width = 1pt, color=compositionalitydarkpink,->-]
\tikzstyle{CQWire}=[line width = 1pt, color=quantumWireviolet,->-]
\tikzstyle{CcWire}=[color=darkgray,line width = .75pt,->-]%[densely dotted, thick]
\tikzstyle{RqWire}=[line width = 1pt, color=compositionalitydarkpink,-<-]
\tikzstyle{RQWire}=[line width = 1pt, color=quantumWireviolet,-<-]
\tikzstyle{RcWire}=[color=darkgray,line width = .75pt,-<-]

\tikzstyle{epiCopoint}=[regular polygon,regular polygon sides=3,draw,scale=0.8,inner sep=-0.5pt,minimum width=5mm,fill=white,regular polygon rotate=0,line width=.5pt]
\tikzstyle{epiPoint}=[regular polygon,regular polygon sides=3,draw,scale=0.8,inner sep=-0.5pt,minimum width=5mm,fill=white,regular polygon rotate=180,line width=.5pt]
\tikzstyle{epiPointWide}=[regular polygon,regular polygon sides=3,draw,xscale=0.75,yscale=.55,inner sep=-0.5pt,minimum width=8mm,fill=white,regular polygon rotate=180,line width=.5pt]
\tikzstyle{epiBox}=[fill=white,draw, line width = .5pt,inner sep=0.6mm,font=\footnotesize,minimum height=3mm,minimum width=3mm]
\tikzstyle{epiBoxWide}=[fill=white,draw, line width = .5pt,inner sep=0.6mm,font=\footnotesize,minimum height=3.5mm,minimum width=5mm]
\tikzstyle{epiBoxVeryWide}=[fill=white,draw, line width = .5pt,inner sep=0.6mm,font=\footnotesize,minimum height=3mm,minimum width=7mm]

%[densely dotted, thick]
%MY GROUND:
\tikzstyle{env}=[copoint,regular polygon rotate=0,minimum width=0.2cm, fill=black]

\tikzstyle{probs}=[shape=semicircle,fill=white,draw=black,shape border rotate=180,minimum width=1.2cm]

%SIMON'S GROUND:
%
%\newcommand{\ground}[2]{
%\node[inner sep=0mm] (#1) at (#2) {};
%\draw[thick]  ($(#2)+(0.3,-0.01)$) -- ($(#2)+(-0.3,-0.01)$);
%\draw[thick]  ($(#2)+(0.23,0.069)$) -- ($(#2)+(-0.22,0.069)$);se
%\draw[thick]  ($(#2)+(0.16,0.139)$) -- ($(#2)+(-0.16,0.139)$);
%\draw[thick]  ($(#2)+(0.09,0.209)$) -- ($(#2)+(-0.09,0.209)$);
%\draw[thick]  ($(#2)+(0.02,0.279)$) -- ($(#2)+(-0.02,0.279)$);
%}
%
%\newcommand{\sground}[2]{
%\node[inner sep=0mm] (#1) at (#2) {};
%\draw[thick]  ($(#2)+(0.2,-0.01)$) -- ($(#2)+(-0.2,-0.01)$);
%\draw[thick]  ($(#2)+(0.12,0.069)$) -- ($(#2)+(-0.12,0.069)$);
%\draw[thick]  ($(#2)+(0.04,0.139)$) -- ($(#2)+(-0.04,0.139)$);
%}

%%%%%%%%%%%%%%%%%%%%%%%%%%%%%%%%%

\tikzstyle{every picture}=[baseline=-0.25em,scale=0.5]
\tikzstyle{dotpic}=[] % for backwards-compatibility
\tikzstyle{diredges}=[every to/.style={diredge}]
\tikzstyle{math matrix}=[matrix of math nodes,left delimiter=(,right delimiter=),inner sep=2pt,column sep=1em,row sep=0.5em,nodes={inner sep=0pt},text height=1.5ex, text depth=0.25ex]

% ==========
% = LABELS =
% ==========

\tikzstyle{inline text}=[text height=1.5ex, text depth=0.25ex,yshift=0.5mm]
\tikzstyle{label}=[font=\footnotesize,text height=1.5ex, text depth=0.25ex,yshift=0.5mm]
\tikzstyle{left label}=[label,anchor=east,xshift=1.5mm]
\tikzstyle{right label}=[label,anchor=west,xshift=-1mm]
\tikzstyle{up label}=[label,anchor=south,yshift=-1mm]

\tikzstyle{rght label}=[label,anchor=west,xshift=-1mm]

% create a white box of the given tikz size

\tikzstyle{braceedge}=[decorate,decoration={brace,amplitude=2mm,raise=-1mm}]
\tikzstyle{small braceedge}=[decorate,decoration={brace,amplitude=1mm,raise=-1mm}]

\tikzstyle{doubled}=[line width=1.6pt] % set the line width for all doubled (quantum) maps/wires
\tikzstyle{boldedge}=[doubled,shorten <=-0.17mm,shorten >=-0.17mm]
\tikzstyle{boldedgegray}=[doubled,gray,shorten <=-0.17mm,shorten >=-0.17mm]
\tikzstyle{singleedgegray}=[gray]%,shorten <=-0.1mm,shorten >=-0.1mm]

\tikzstyle{semidoubled}=[line width=1.4pt] % set the line width for all doubled (quantum) maps/wires
\tikzstyle{semiboldedgegray}=[semidoubled,gray,shorten <=-0.17mm,shorten >=-0.17mm]

\tikzstyle{boxedge}=[semiboldedgegray]

\tikzstyle{boldedgedashed}=[very thick,dashed,shorten <=-0.17mm,shorten >=-0.17mm]
\tikzstyle{vboldedgedashed}=[doubled,dashed,shorten <=-0.17mm,shorten >=-0.17mm]
\tikzstyle{left hook arrow}=[left hook-latex]
\tikzstyle{right hook arrow}=[right hook-latex]
\tikzstyle{sembracket}=[line width=0.5pt,shorten <=-0.07mm,shorten >=-0.07mm]

\tikzstyle{causal edge}=[->,thick,gray]
\tikzstyle{causal nondir}=[thick,gray]
\tikzstyle{timeline}=[thick,gray, dashed]

% edges for (symmetric) correspondences/correlations
\tikzstyle{cedge}=[<->,thick,gray!70!white]

\tikzstyle{empty diagram}=[draw=gray!40!white,dashed,shape=rectangle,minimum width=1cm,minimum height=1cm]
\tikzstyle{empty diagram small}=[draw=gray!50!white,dashed,shape=rectangle,minimum width=0.6cm,minimum height=0.5cm]

% ================
% = VARIOUS DOTS =
% ================

\tikzstyle{dot}=[inner sep=0mm,minimum width=2mm,minimum height=2mm,draw,shape=circle]
\tikzstyle{bigdot}=[inner sep=0mm,minimum width=5mm,minimum height=5mm,draw,shape=circle]
\tikzstyle{leak}=[white dot, shape=regular polygon, minimum size=3.3 mm, regular polygon sides=3, outer sep=-0.2mm, regular polygon rotate=270]
%\tikzstyle{proj}=[white dot, shape=regular polygon, minimum size=3.3 mm, regular polygon sides=4, outer sep=-0.2mm]
\tikzstyle{proj}=[regular polygon,regular polygon sides=4,draw,scale=0.75,inner sep=-0.5pt,minimum width=6mm,fill=white]
\tikzstyle{projOut}=[regular polygon,regular polygon sides=3,draw,scale=0.75,inner sep=-0.5pt,minimum width=7.5mm,fill=white,regular polygon rotate=180]
\tikzstyle{projIn}=[regular polygon,regular polygon sides=3,draw,scale=0.75,inner sep=-0.5pt,minimum width=7.5mm,fill=white]
\tikzstyle{Vleak}=[white dot, shape=regular polygon, minimum size=3.3 mm, regular polygon sides=3, outer sep=-0.2mm, regular polygon rotate=90]
\tikzstyle{dleak}=[white dot, line width=1.6pt, shape=regular polygon, minimum size=3.3 mm, regular polygon sides=3, outer sep=-0.2mm, regular polygon rotate=270]

\tikzstyle{Wsquare}=[white dot, shape=regular polygon, rounded corners=0.8 mm, minimum size=3.3 mm, regular polygon sides=3, outer sep=-0.2mm]
\tikzstyle{Wsquareadj}=[white dot, shape=regular polygon, rounded corners=0.8 mm, minimum size=3.3 mm, regular polygon sides=3, outer sep=-0.2mm, regular polygon rotate=180]
% \tikzstyle{ddot}=[inner sep=0.6mm, double=white, very thick, double distance=1pt, minimum width=2.5mm,minimum height=2.5mm,draw,shape=circle]
\tikzstyle{ddot}=[inner sep=0mm, doubled, minimum width=2.5mm,minimum height=2.5mm,draw,shape=circle]

\tikzstyle{black dot}=[dot,fill=black]
\tikzstyle{small black dot}=[inner sep=0mm,minimum width=.5mm,minimum height=.5mm,draw,shape=circle,fill=black]
\tikzstyle{white dot}=[dot,fill=white,text depth=-0.2mm]
\tikzstyle{big white dot}=[dot,fill=white,text depth=-0.2mm, minimum width = 2.5mm]
\tikzstyle{white Wsquare}=[Wsquare,fill=gray,,text depth=-0.2mm]
\tikzstyle{white Wsquareadj}=[Wsquareadj,fill=white,,text depth=-0.2mm]
\tikzstyle{green dot}=[white dot] % for backwards-compatibility
\tikzstyle{gray dot}=[dot,fill=gray!40!white,,text depth=-0.2mm]
\tikzstyle{red dot}=[gray dot] % for backwards-compatibility

% \tikzstyle{red point}=[point,fill=red,font=\color{white}]
% \tikzstyle{red dpoint}=[dpoint,fill=red,font=\color{white}]
% \tikzstyle{red dot}=[dot,fill=red,font=\color{white}]
% \tikzstyle{red ddot}=[ddot,fill=red,font=\color{white}]

\tikzstyle{black ddot}=[ddot,fill=black]
\tikzstyle{white ddot}=[ddot,fill=white]
\tikzstyle{gray ddot}=[ddot,fill=gray!40!white]

\tikzstyle{gray edge}=[gray!60!white]

\tikzstyle{small dot}=[inner sep=0.5mm,minimum width=0pt,minimum height=0pt,draw,shape=circle]

%\tikzstyle{small black dot}=[small dot,fill=black]
\tikzstyle{small white dot}=[small dot,fill=white]
\tikzstyle{small gray dot}=[small dot,fill=gray!40!white]

\tikzstyle{causal dot}=[inner sep=0.4mm,minimum width=0pt,minimum height=0pt,draw=white,shape=circle,fill=gray!40!white]

%\tikzstyle{phase dimensions}=[font=\footnotesize,inner sep=0.5pt,minimum width=5mm,minimum height=5mm]

\tikzstyle{phase dimensions}=[minimum size=5mm,font=\footnotesize,rectangle,rounded corners=2.5mm,inner sep=0.2mm,outer sep=-2mm]
%,outer sep=-2mm,text height=1ex, text depth=0.25ex,
\tikzstyle{dphase dimensions}=[minimum size=5mm,font=\footnotesize,rectangle,rounded corners=2.5mm,inner sep=0.2mm,outer sep=-2mm]
%\tikzstyle{dphase dimensions}=[minimum size=5mm,font=\footnotesize,rectangle,rounded corners=2.5mm,inner sep=0.2mm,outer sep=-2mm]

\tikzstyle{white phase dot}=[dot,fill=white,phase dimensions]
\tikzstyle{white phase ddot}=[ddot,fill=white,dphase dimensions]

\tikzstyle{white rect ddot}=[draw=black,fill=white,doubled,minimum size=5mm,font=\footnotesize,rectangle,rounded corners=2.5mm,inner sep=0.2mm]
\tikzstyle{gray rect ddot}=[draw=black,fill=gray!40!white,doubled,minimum size=6mm,font=\footnotesize,rectangle,rounded corners=3mm]

\tikzstyle{gray phase dot}=[dot,fill=gray!40!white,phase dimensions]
\tikzstyle{gray phase ddot}=[ddot,fill=gray!40!white,dphase dimensions]
\tikzstyle{grey phase dot}=[gray phase dot]
\tikzstyle{grey phase ddot}=[gray phase ddot]

\tikzstyle{small phase dimensions}=[minimum size=4mm,font=\tiny,rectangle,rounded corners=2mm,inner sep=0.2mm,outer sep=-2mm]
\tikzstyle{small dphase dimensions}=[minimum size=4mm,font=\tiny,rectangle,rounded corners=2mm,inner sep=0.2mm,outer sep=-2mm]

\tikzstyle{small gray phase dot}=[dot,fill=gray!40!white,small phase dimensions]
\tikzstyle{small gray phase ddot}=[ddot,fill=gray!40!white,small dphase dimensions]

% =======================
% = OTHER KINDS OF MAPS =
% =======================

\tikzstyle{small map}=[draw,shape=rectangle,minimum height=4mm,minimum width=4mm,fill=white]

\tikzstyle{cnot}=[fill=white,shape=circle,inner sep=-1.4pt]

\tikzstyle{asym hadamard}=[fill=white,draw,shape=NEbox,inner sep=0.6mm,font=\footnotesize,minimum height=4mm]
\tikzstyle{asym hadamard conj}=[fill=white,draw,shape=NWbox,inner sep=0.6mm,font=\footnotesize,minimum height=4mm]
\tikzstyle{asym hadamard dag}=[fill=white,draw,shape=SEbox,inner sep=0.6mm,font=\footnotesize,minimum height=4mm]

\tikzstyle{hadamard}=[fill=white,draw,inner sep=0.6mm,font=\footnotesize,minimum height=4mm,minimum width=4mm]
\tikzstyle{black square}=[fill=black,draw,inner sep=0.6mm,font=\footnotesize,minimum height=2mm,minimum width=2mm]
\tikzstyle{small hadamard}=[fill=white,draw,inner sep=0.6mm,minimum height=1.5mm,minimum width=1.5mm]
\tikzstyle{small hadamard rotate}=[small hadamard,rotate=45]
\tikzstyle{dhadamard}=[hadamard,doubled]
\tikzstyle{small dhadamard}=[small hadamard,doubled]
\tikzstyle{small dhadamard rotate}=[small hadamard rotate,doubled]
\tikzstyle{antipode}=[white dot,inner sep=0.3mm,font=\footnotesize]

\tikzstyle{scalar}=[diamond,draw,inner sep=0.5pt,font=\small]
\tikzstyle{dscalar}=[diamond,doubled, draw,inner sep=0.5pt,font=\small]

\tikzstyle{small box}=[rectangle,inline text,fill=white,draw,minimum height=5mm,yshift=-0.5mm,minimum width=5mm,font=\small]
\tikzstyle{small gray box}=[small box,fill=gray!30]
\tikzstyle{medium box}=[rectangle,inline text,fill=white,draw,minimum height=5mm,yshift=-0.5mm,minimum width=10mm,font=\small]
\tikzstyle{square box}=[small box] % for backwards-compatibility
\tikzstyle{medium gray box}=[small box,fill=gray!30]
\tikzstyle{semilarge box}=[rectangle,inline text,fill=white,draw,minimum height=5mm,yshift=-0.5mm,minimum width=12.5mm,font=\small]
\tikzstyle{large box}=[rectangle,inline text,fill=white,draw,minimum height=5mm,yshift=-0.5mm,minimum width=15mm,font=\small]
\tikzstyle{large gray box}=[small box,fill=gray!30]

\tikzstyle{Bayes box}=[rectangle,fill=black,draw, minimum height=3mm, minimum width=3mm]

\tikzstyle{gray square point}=[small box,fill=gray!50]

\tikzstyle{dphase box white}=[dhadamard]
\tikzstyle{dphase box gray}=[dhadamard,fill=gray!50!white]
\tikzstyle{phase box white}=[hadamard]
\tikzstyle{phase box gray}=[hadamard,fill=gray!50!white]

% \tikzstyle{point}=[regular polygon,regular polygon sides=3,draw,inner sep=-0.65pt,minimum width=8mm,fill=white,regular polygon rotate=180]
% \tikzstyle{copoint}=[regular polygon,regular polygon sides=3,draw,inner sep=-0.65pt,minimum width=8mm,fill=white]
\tikzstyle{point nosep}=[regular polygon,regular polygon sides=3,draw,scale=0.75,inner sep=-2pt,minimum width=9mm,fill=white,regular polygon rotate=180]

\tikzstyle{point}=[regular polygon,regular polygon sides=3,draw,scale=0.75,inner sep=-0.5pt,minimum width=9mm,fill=white,regular polygon rotate=180]

\tikzstyle{copoint}=[regular polygon,regular polygon sides=3,draw,scale=0.75,inner sep=-0.5pt,minimum width=9mm,fill=white]
%,regular polygon rotate=180]

\tikzstyle{dpoint}=[point,doubled]
\tikzstyle{dcopoint}=[copoint,doubled]

\tikzstyle{pointgrow}=[shape=cornerpoint,kpoint common,scale=0.75,inner sep=3pt]
\tikzstyle{pointgrow dag}=[shape=cornercopoint,kpoint common,scale=0.75,inner sep=3pt]

\tikzstyle{wide copoint}=[fill=white,draw,shape=isosceles triangle,shape border rotate=90,isosceles triangle stretches=true,inner sep=0pt,minimum width=1.5cm,minimum height=6.12mm]
\tikzstyle{wide point}=[fill=white,draw,shape=isosceles triangle,shape border rotate=-90,isosceles triangle stretches=true,inner sep=0pt,minimum width=1.5cm,minimum height=6.12mm,yshift=-0.0mm]
\tikzstyle{wide point plus}=[fill=white,draw,shape=isosceles triangle,shape border rotate=-90,isosceles triangle stretches=true,inner sep=0pt,minimum width=1.74cm,minimum height=7mm,yshift=-0.0mm]

\tikzstyle{wide dpoint}=[fill=white,doubled,draw,shape=isosceles triangle,shape border rotate=-90,isosceles triangle stretches=true,inner sep=0pt,minimum width=1.5cm,minimum height=6.12mm,yshift=-0.0mm]

\tikzstyle{tinypoint}=[regular polygon,regular polygon sides=3,draw,scale=0.55,inner sep=-0.15pt,minimum width=6mm,fill=white,regular polygon rotate=180]

\tikzstyle{white point}=[point]
\tikzstyle{white dpoint}=[dpoint]
\tikzstyle{green point}=[white point] % for backwards-compatibility
\tikzstyle{white copoint}=[copoint]
\tikzstyle{gray point}=[point,fill=gray!40!white]
\tikzstyle{gray dpoint}=[gray point,doubled]
\tikzstyle{red point}=[gray point] % for backwards-compatibility
\tikzstyle{gray copoint}=[copoint,fill=gray!40!white]
\tikzstyle{gray dcopoint}=[gray copoint,doubled]

\tikzstyle{white point guide}=[regular polygon,regular polygon sides=3,font=\scriptsize,draw,scale=0.65,inner sep=-0.5pt,minimum width=9mm,fill=white,regular polygon rotate=180]

\tikzstyle{black point}=[point,fill=black,font=\color{white}]
\tikzstyle{black copoint}=[copoint,fill=black,font=\color{white}]

\tikzstyle{tiny gray point}=[tinypoint,fill=gray!40!white]

\tikzstyle{diredge}=[->]
\tikzstyle{ddiredge}=[<->]
\tikzstyle{rdiredge}=[<-]
\tikzstyle{thickdiredge}=[->, very thick]
\tikzstyle{pointer edge}=[->,very thick,gray]
\tikzstyle{pointer edge part}=[very thick,gray]
\tikzstyle{dashed edge}=[dashed]
\tikzstyle{thick dashed edge}=[very thick,dashed]
\tikzstyle{thick gray dashed edge}=[thick dashed edge,gray!40]
\tikzstyle{thick map edge}=[very thick,|->]

% =======================
% = PARALLELAGRAM BOXES =
% =======================

\makeatletter
\newcommand{\boxshape}[3]{%
\pgfdeclareshape{#1}{
\inheritsavedanchors[from=rectangle] % this is nearly a rectangle
\inheritanchorborder[from=rectangle]
\inheritanchor[from=rectangle]{center}
\inheritanchor[from=rectangle]{north}
\inheritanchor[from=rectangle]{south}
\inheritanchor[from=rectangle]{west}
\inheritanchor[from=rectangle]{east}
% ... and possibly more
\backgroundpath{% this is new
% store lower right in xa/ya and upper right in xb/yb
\southwest \pgf@xa=\pgf@x \pgf@ya=\pgf@y
\northeast \pgf@xb=\pgf@x \pgf@yb=\pgf@y

\@tempdima=#2
\@tempdimb=#3

\pgfpathmoveto{\pgfpoint{\pgf@xa - 5pt + \@tempdima}{\pgf@ya}}
\pgfpathlineto{\pgfpoint{\pgf@xa - 5pt - \@tempdima}{\pgf@yb}}
\pgfpathlineto{\pgfpoint{\pgf@xb + 5pt + \@tempdimb}{\pgf@yb}}
\pgfpathlineto{\pgfpoint{\pgf@xb + 5pt - \@tempdimb}{\pgf@ya}}
\pgfpathlineto{\pgfpoint{\pgf@xa - 5pt + \@tempdima}{\pgf@ya}}
\pgfpathclose
}
}}

%\boxshape{NEbox}{0pt}{5pt}
\boxshape{NEbox}{0pt}{3pt}
%\boxshape{SEbox}{0pt}{-5pt}
\boxshape{SEbox}{0pt}{-3pt}
\boxshape{NWbox}{5pt}{0pt}
\boxshape{SWbox}{-5pt}{0pt}
\boxshape{EBox}{-3pt}{3pt}
\boxshape{WBox}{3pt}{-3pt}
\makeatother

\tikzstyle{cloud}=[shape=cloud,draw,minimum width=1.5cm,minimum height=1.5cm]

%\tikzstyle{map}=[draw,shape=NEbox,inner sep=2pt,minimum height=6mm,fill=white]
\tikzstyle{map}=[draw,shape=NEbox,inner sep=1pt,minimum height=4mm,fill=white]
\tikzstyle{dashedmap}=[draw,dashed,shape=NEbox,inner sep=2pt,minimum height=6mm,fill=white]
%\tikzstyle{mapdag}=[draw,shape=SEbox,inner sep=2pt,minimum height=6mm,fill=white]
\tikzstyle{mapdag}=[draw,shape=SEbox,inner sep=1pt,minimum height=4mm,fill=white]
\tikzstyle{mapadj}=[draw,shape=SEbox,inner sep=2pt,minimum height=6mm,fill=white]
\tikzstyle{maptrans}=[draw,shape=SWbox,inner sep=2pt,minimum height=6mm,fill=white]
\tikzstyle{mapconj}=[draw,shape=NWbox,inner sep=2pt,minimum height=6mm,fill=white]

\tikzstyle{medium map}=[draw,shape=NEbox,inner sep=2pt,minimum height=6mm,fill=white,minimum width=7mm]
\tikzstyle{medium map dag}=[draw,shape=SEbox,inner sep=2pt,minimum height=6mm,fill=white,minimum width=7mm]
\tikzstyle{medium map adj}=[draw,shape=SEbox,inner sep=2pt,minimum height=6mm,fill=white,minimum width=7mm]
\tikzstyle{medium map trans}=[draw,shape=SWbox,inner sep=2pt,minimum height=6mm,fill=white,minimum width=7mm]
\tikzstyle{medium map conj}=[draw,shape=NWbox,inner sep=2pt,minimum height=6mm,fill=white,minimum width=7mm]
\tikzstyle{semilarge map}=[draw,shape=NEbox,inner sep=2pt,minimum height=6mm,fill=white,minimum width=9.5mm]
\tikzstyle{semilarge map trans}=[draw,shape=SWbox,inner sep=2pt,minimum height=6mm,fill=white,minimum width=9.5mm]
\tikzstyle{semilarge map adj}=[draw,shape=SEbox,inner sep=2pt,minimum height=6mm,fill=white,minimum width=9.5mm]
\tikzstyle{semilarge map dag}=[draw,shape=SEbox,inner sep=2pt,minimum height=6mm,fill=white,minimum width=9.5mm]
\tikzstyle{semilarge map conj}=[draw,shape=NWbox,inner sep=2pt,minimum height=6mm,fill=white,minimum width=9.5mm]
\tikzstyle{large map}=[draw,shape=NEbox,inner sep=2pt,minimum height=6mm,fill=white,minimum width=12mm]
\tikzstyle{large map conj}=[draw,shape=NWbox,inner sep=2pt,minimum height=6mm,fill=white,minimum width=12mm]
\tikzstyle{very large map}=[draw,shape=NEbox,inner sep=2pt,minimum height=6mm,fill=white,minimum width=17mm]

\tikzstyle{medium dmap}=[draw,doubled,shape=NEbox,inner sep=2pt,minimum height=6mm,fill=white,minimum width=7mm]
\tikzstyle{medium dmap dag}=[draw,doubled,shape=SEbox,inner sep=2pt,minimum height=6mm,fill=white,minimum width=7mm]
\tikzstyle{medium dmap adj}=[draw,doubled,shape=SEbox,inner sep=2pt,minimum height=6mm,fill=white,minimum width=7mm]
\tikzstyle{medium dmap trans}=[draw,doubled,shape=SWbox,inner sep=2pt,minimum height=6mm,fill=white,minimum width=7mm]
\tikzstyle{medium dmap conj}=[draw,doubled,shape=NWbox,inner sep=2pt,minimum height=6mm,fill=white,minimum width=7mm]
\tikzstyle{semilarge dmap}=[draw,doubled,shape=NEbox,inner sep=2pt,minimum height=6mm,fill=white,minimum width=9.5mm]
\tikzstyle{semilarge dmap trans}=[draw,doubled,shape=SWbox,inner sep=2pt,minimum height=6mm,fill=white,minimum width=9.5mm]
\tikzstyle{semilarge dmap adj}=[draw,doubled,shape=SEbox,inner sep=2pt,minimum height=6mm,fill=white,minimum width=9.5mm]
\tikzstyle{semilarge dmap dag}=[draw,doubled,shape=SEbox,inner sep=2pt,minimum height=6mm,fill=white,minimum width=9.5mm]
\tikzstyle{semilarge dmap conj}=[draw,doubled,shape=NWbox,inner sep=2pt,minimum height=6mm,fill=white,minimum width=9.5mm]
\tikzstyle{large dmap}=[draw,doubled,shape=NEbox,inner sep=2pt,minimum height=6mm,fill=white,minimum width=12mm]
\tikzstyle{large dmap conj}=[draw,doubled,shape=NWbox,inner sep=2pt,minimum height=6mm,fill=white,minimum width=12mm]
\tikzstyle{large dmap trans}=[draw,doubled,shape=SWbox,inner sep=2pt,minimum height=6mm,fill=white,minimum width=12mm]
\tikzstyle{large dmap adj}=[draw,doubled,shape=SEbox,inner sep=2pt,minimum height=6mm,fill=white,minimum width=12mm]
\tikzstyle{large dmap dag}=[draw,doubled,shape=SEbox,inner sep=2pt,minimum height=6mm,fill=white,minimum width=12mm]
\tikzstyle{very large dmap}=[draw,doubled,shape=NEbox,inner sep=2pt,minimum height=6mm,fill=white,minimum width=19.5mm]

\tikzstyle{muxbox}=[draw,shape=rectangle,minimum height=3mm,minimum width=3mm,fill=white]
\tikzstyle{dmuxbox}=[muxbox,doubled]

\tikzstyle{box}=[draw,shape=rectangle,inner sep=2pt,minimum height=6mm,minimum width=6mm,fill=white]
\tikzstyle{dbox}=[draw,doubled,shape=rectangle,inner sep=2pt,minimum height=6mm,minimum width=6mm,fill=white]
\tikzstyle{dmap}=[draw,doubled,shape=NEbox,inner sep=2pt,minimum height=6mm,fill=white]
\tikzstyle{dmapdag}=[draw,doubled,shape=SEbox,inner sep=2pt,minimum height=6mm,fill=white]
\tikzstyle{dmapadj}=[draw,doubled,shape=SEbox,inner sep=2pt,minimum height=6mm,fill=white]
\tikzstyle{dmaptrans}=[draw,doubled,shape=SWbox,inner sep=2pt,minimum height=6mm,fill=white]
\tikzstyle{dmapconj}=[draw,doubled,shape=NWbox,inner sep=2pt,minimum height=6mm,fill=white]

\tikzstyle{ddmap}=[draw,doubled,dashed,shape=NEbox,inner sep=2pt,minimum height=6mm,fill=white]
\tikzstyle{ddmapdag}=[draw,doubled,dashed,shape=SEbox,inner sep=2pt,minimum height=6mm,fill=white]
\tikzstyle{ddmapadj}=[draw,doubled,dashed,shape=SEbox,inner sep=2pt,minimum height=6mm,fill=white]
\tikzstyle{ddmaptrans}=[draw,doubled,dashed,shape=SWbox,inner sep=2pt,minimum height=6mm,fill=white]
\tikzstyle{ddmapconj}=[draw,doubled,dashed,shape=NWbox,inner sep=2pt,minimum height=6mm,fill=white]

\boxshape{sNEbox}{0pt}{3pt}
\boxshape{sSEbox}{0pt}{-3pt}
\boxshape{sNWbox}{3pt}{0pt}
\boxshape{sSWbox}{-3pt}{0pt}
\tikzstyle{smap}=[draw,shape=sNEbox,fill=white]
\tikzstyle{smapdag}=[draw,shape=sSEbox,fill=white]
\tikzstyle{smapadj}=[draw,shape=sSEbox,fill=white]
\tikzstyle{smaptrans}=[draw,shape=sSWbox,fill=white]
\tikzstyle{smapconj}=[draw,shape=sNWbox,fill=white]

\tikzstyle{dsmap}=[draw,dashed,shape=sNEbox,fill=white]
\tikzstyle{dsmapdag}=[draw,dashed,shape=sSEbox,fill=white]
\tikzstyle{dsmaptrans}=[draw,dashed,shape=sSWbox,fill=white]
\tikzstyle{dsmapconj}=[draw,dashed,shape=sNWbox,fill=white]

\boxshape{mNEbox}{0pt}{10pt}
\boxshape{mSEbox}{0pt}{-10pt}
\boxshape{mNWbox}{10pt}{0pt}
\boxshape{mSWbox}{-10pt}{0pt}
\tikzstyle{mmap}=[draw,shape=mNEbox]
\tikzstyle{mmapdag}=[draw,shape=mSEbox]
\tikzstyle{mmaptrans}=[draw,shape=mSWbox]
\tikzstyle{mmapconj}=[draw,shape=mNWbox]

\tikzstyle{mmapgray}=[draw,fill=gray!40!white,shape=mNEbox]
\tikzstyle{smapgray}=[draw,fill=gray!40!white,shape=sNEbox]

\makeatletter

\pgfdeclareshape{cornerpoint}{
\inheritsavedanchors[from=rectangle] % this is nearly a rectangle
\inheritanchorborder[from=rectangle]
\inheritanchor[from=rectangle]{center}
\inheritanchor[from=rectangle]{north}
\inheritanchor[from=rectangle]{south}
\inheritanchor[from=rectangle]{west}
\inheritanchor[from=rectangle]{east}
% ... and possibly more
\backgroundpath{% this is new
% store lower right in xa/ya and upper right in xb/yb
\southwest \pgf@xa=\pgf@x \pgf@ya=\pgf@y
\northeast \pgf@xb=\pgf@x \pgf@yb=\pgf@y

\pgfmathsetmacro{\pgf@shorten@left}{\pgfkeysvalueof{/tikz/shorten left}}
\pgfmathsetmacro{\pgf@shorten@right}{\pgfkeysvalueof{/tikz/shorten right}}

\pgfpathmoveto{\pgfpoint{0.5 * (\pgf@xa + \pgf@xb)}{\pgf@ya - 5pt}}
\pgfpathlineto{\pgfpoint{\pgf@xa - 8pt + \pgf@shorten@left}{\pgf@yb - 1.5 * \pgf@shorten@left}}
\pgfpathlineto{\pgfpoint{\pgf@xa - 8pt + \pgf@shorten@left}{\pgf@yb}}
\pgfpathlineto{\pgfpoint{\pgf@xb + 8pt - \pgf@shorten@right}{\pgf@yb}}
\pgfpathlineto{\pgfpoint{\pgf@xb + 8pt - \pgf@shorten@right}{\pgf@yb - 1.5 * \pgf@shorten@right}}
\pgfpathclose
}
}

\pgfdeclareshape{cornercopoint}{
\inheritsavedanchors[from=rectangle] % this is nearly a rectangle
\inheritanchorborder[from=rectangle]
\inheritanchor[from=rectangle]{center}
\inheritanchor[from=rectangle]{north}
\inheritanchor[from=rectangle]{south}
\inheritanchor[from=rectangle]{west}
\inheritanchor[from=rectangle]{east}
% ... and possibly more
\backgroundpath{% this is new
% store lower right in xa/ya and upper right in xb/yb
\southwest \pgf@xa=\pgf@x \pgf@ya=\pgf@y
\northeast \pgf@xb=\pgf@x \pgf@yb=\pgf@y

\pgfmathsetmacro{\pgf@shorten@left}{\pgfkeysvalueof{/tikz/shorten left}}
\pgfmathsetmacro{\pgf@shorten@right}{\pgfkeysvalueof{/tikz/shorten right}}

\pgfpathmoveto{\pgfpoint{0.5 * (\pgf@xa + \pgf@xb)}{\pgf@yb + 5pt}}
\pgfpathlineto{\pgfpoint{\pgf@xa - 8pt + \pgf@shorten@left}{\pgf@ya + 1.5 * \pgf@shorten@left}}
\pgfpathlineto{\pgfpoint{\pgf@xa - 8pt + \pgf@shorten@left}{\pgf@ya}}
\pgfpathlineto{\pgfpoint{\pgf@xb + 8pt - \pgf@shorten@right}{\pgf@ya}}
\pgfpathlineto{\pgfpoint{\pgf@xb + 8pt - \pgf@shorten@right}{\pgf@ya + 1.5 * \pgf@shorten@right}}
\pgfpathclose
}
}

\makeatother

\pgfkeyssetvalue{/tikz/shorten left}{0pt}
\pgfkeyssetvalue{/tikz/shorten right}{0pt}

\tikzstyle{kpoint common}=[draw,fill=white,inner sep=1pt,minimum height=4mm]
\tikzstyle{kpoint sc}=[shape=cornerpoint,kpoint common]
\tikzstyle{kpoint adjoint sc}=[shape=cornercopoint,kpoint common]
\tikzstyle{kpoint}=[shape=cornerpoint,shorten left=5pt,kpoint common]
\tikzstyle{kpoint adjoint}=[shape=cornercopoint,shorten left=5pt,kpoint common]
\tikzstyle{kpoint conjugate}=[shape=cornerpoint,shorten right=5pt,kpoint common]
\tikzstyle{kpoint transpose}=[shape=cornercopoint,shorten right=5pt,kpoint common]
\tikzstyle{kpoint symm}=[shape=cornerpoint,shorten left=5pt,shorten right=5pt,kpoint common]

\tikzstyle{wide kpoint sc}=[shape=cornerpoint,kpoint common, minimum width=1 cm]
\tikzstyle{wide kpointdag sc}=[shape=cornercopoint,kpoint common, minimum width=1 cm]

\tikzstyle{black kpoint}=[shape=cornerpoint,shorten left=5pt,kpoint common,fill=black,font=\color{white}]

\tikzstyle{black kpoint sm}=[shape=cornerpoint,shorten left=5pt,kpoint common,fill=black,font=\color{white},scale=0.75]

\tikzstyle{black kpoint adjoint}=[shape=cornercopoint,shorten left=5pt,kpoint common,fill=black,font=\color{white}]
\tikzstyle{black kpointadj}=[shape=cornercopoint,shorten left=5pt,kpoint common,fill=black,font=\color{white}]

\tikzstyle{black kpointadj sm}=[shape=cornercopoint,shorten left=5pt,kpoint common,fill=black,font=\color{white},scale=0.75]

\tikzstyle{black dkpoint}=[shape=cornerpoint,shorten left=5pt,kpoint common,fill=black, doubled,font=\color{white}]
\tikzstyle{black dkpoint adjoint}=[shape=cornercopoint,shorten left=5pt,kpoint common,fill=black, doubled,font=\color{white}]
\tikzstyle{black dkpointadj}=[shape=cornercopoint,shorten left=5pt,kpoint common,fill=black, doubled,font=\color{white}]

\tikzstyle{black dkpoint sm}=[shape=cornerpoint,shorten left=5pt,kpoint common,fill=black, doubled,font=\color{white},scale=0.75]
\tikzstyle{black dkpointadj sm}=[shape=cornercopoint,shorten left=5pt,kpoint common,fill=black, doubled,font=\color{white},scale=0.75]

\tikzstyle{kpointdag}=[kpoint adjoint]
\tikzstyle{kpointadj}=[kpoint adjoint]
\tikzstyle{kpointconj}=[kpoint conjugate]
\tikzstyle{kpointtrans}=[kpoint transpose]

\tikzstyle{big kpoint}=[kpoint, minimum width=1.2 cm, minimum height=8mm, inner sep=4pt, text depth=3mm]

\tikzstyle{wide kpoint}=[kpoint, minimum width=1 cm, inner sep=2pt]%, text depth=-0.7 mm]
\tikzstyle{wide kpointdag}=[kpointdag, minimum width=1 cm, inner sep=2pt]%, text depth=0.7 mm]
\tikzstyle{wide kpointconj}=[kpointconj, minimum width=1 cm, inner sep=2pt]%, text depth=-0.7 mm]
\tikzstyle{wide kpointtrans}=[kpointtrans, minimum width=1 cm, inner sep=2pt]%, text depth=0.7 mm]

\tikzstyle{wider kpoint}=[kpoint, minimum width=1.25 cm, inner sep=2pt]%, text depth=-0.7 mm]
\tikzstyle{wider kpointdag}=[kpointdag, minimum width=1.25 cm, inner sep=2pt]%, text depth=0.7 mm]
\tikzstyle{wider kpointconj}=[kpointconj, minimum width=1.25 cm, inner sep=2pt]%, text depth=-0.7 mm]
\tikzstyle{wider kpointtrans}=[kpointtrans, minimum width=1.25 cm, inner sep=2pt]%, text depth=0.7 mm]

\tikzstyle{gray kpoint}=[kpoint,fill=gray!50!white]
\tikzstyle{gray kpointdag}=[kpointdag,fill=gray!50!white]
\tikzstyle{gray kpointadj}=[kpointadj,fill=gray!50!white]
\tikzstyle{gray kpointconj}=[kpointconj,fill=gray!50!white]
\tikzstyle{gray kpointtrans}=[kpointtrans,fill=gray!50!white]

\tikzstyle{gray dkpoint}=[kpoint,fill=gray!50!white,doubled]
\tikzstyle{gray dkpointdag}=[kpointdag,fill=gray!50!white,doubled]
\tikzstyle{gray dkpointadj}=[kpointadj,fill=gray!50!white,doubled]
\tikzstyle{gray dkpointconj}=[kpointconj,fill=gray!50!white,doubled]
\tikzstyle{gray dkpointtrans}=[kpointtrans,fill=gray!50!white,doubled]

\tikzstyle{white label}=[draw,fill=white,rectangle,inner sep=0.7 mm]
\tikzstyle{gray label}=[draw,fill=gray!50!white,rectangle,inner sep=0.7 mm]
\tikzstyle{black label}=[draw,fill=black,rectangle,inner sep=0.7 mm]

\tikzstyle{dkpoint}=[kpoint,doubled]
\tikzstyle{wide dkpoint}=[wide kpoint,doubled]
\tikzstyle{dkpointdag}=[kpoint adjoint,doubled]
\tikzstyle{wide dkpointdag}=[wide kpointdag,doubled]
\tikzstyle{dkcopoint}=[kpoint adjoint,doubled]
\tikzstyle{dkpointadj}=[kpoint adjoint,doubled]
\tikzstyle{dkpointconj}=[kpoint conjugate,doubled]
\tikzstyle{dkpointtrans}=[kpoint transpose,doubled]

\tikzstyle{kscalar}=[kpoint common, shape=EBox, inner xsep=-1pt, inner ysep=3pt,font=\small]
\tikzstyle{kscalarconj}=[kpoint common, shape=WBox, inner xsep=-1pt, inner ysep=3pt,font=\small]

\tikzstyle{spekpoint}=[kpoint sc,minimum height=5mm,inner sep=3pt]
\tikzstyle{spekcopoint}=[kpoint adjoint sc,minimum height=5mm,inner sep=3pt]

\tikzstyle{dspekpoint}=[spekpoint,doubled]
\tikzstyle{dspekcopoint}=[spekcopoint,doubled]

% ========================
% = GROUND =
% ========================

 \tikzstyle{upground}=[circuit ee IEC,thick,ground,rotate=90,scale=2.5]
 \tikzstyle{downground}=[circuit ee IEC,thick,ground,rotate=-90,scale=2.5]
 %\tikzstyle{ground}=[regular polygon,regular polygon sides=3,draw=gray,scale=0.50,inner sep=-0.5pt,minimum width=5mm,fill=gray]
 \tikzstyle{bigground}=[regular polygon,regular polygon sides=3,draw=gray,scale=0.50,inner sep=-0.5pt,minimum width=10mm,fill=gray]
 %\tikzstyle{grounddag}=[regular polygon,regular polygon sides=3,draw=gray,scale=0.50,inner sep=-0.5pt,minimum width=5mm,fill=gray,regular polygon rotate=180]

% ========================
% = COMMUTATIVE DIAGRAMS =
% ========================

\tikzstyle{arrs}=[-latex,font=\small,auto]
\tikzstyle{arrow plain}=[arrs]
\tikzstyle{arrow dashed}=[dashed,arrs]
\tikzstyle{arrow bold}=[very thick,arrs]
\tikzstyle{arrow hide}=[draw=white!0,-]
\tikzstyle{arrow reverse}=[latex-]
\tikzstyle{cdnode}=[]

%\input{tikzfigures.tex}

%\reinstaterules

\let\olddagger\dagger
\renewcommand{\dagger}{\ensuremath{\olddagger}\xspace}

% indexes
% uncomment the relevant set of commands

% for a single index
\usepackage{makeidx}
\makeindex

% for multiple indexes using multind.sty
  % \usepackage{multind}\ProvidesPackage{multind}
  % \makeindex{authors}
  % \makeindex{subject}

% for multiple indexes using index.sty
% \usepackage{index}
% \newindex{aut}{adx}{and}{Author index}
% \makeindex

%\newcommand\cambridge{cambridge6A}

% OURS
\theoremstyle{definition}
\newtheorem{theorem}{Theorem}[section]

\newtheorem{proposition}[theorem]{Proposition}
\newtheorem{definition}[theorem]{Definition}
\newtheorem{remark}[theorem]{Remark}
%\newtheorem{remark*}[theorem]{Remark*}
%\newtheorem{question}[theorem]{Question}
%\newtheorem{assumption}[theorem]{Assumption}
%\newtheorem{convention}[theorem]{Convention}
%\newtheorem{justification}[theorem]{Justification}
%\newtheoremstyle{exercise}{3pt}{3pt}{\color{red}}{}{\bf}{}{.5em}{}
%\theoremstyle{exercise}
\newtheorem{problem}[]{Open Problem}

%\if\lecturenotes0
%
%\newtheorem{exer}[theorem]{Exercise}
%\newtheorem{exeropt}[theorem]{Exercise}
%\newtheorem{exer*}[theorem]{Exercise*}
%
%\fi%\lecturenotes0
%
%
%\if\lecturenotes1
%
%\newtheorem{exeropt}[theorem]{Exercise*}
%\newtheorem{exer*}[theorem]{Exercise*}
%
%\newtheoremstyle{exercise}{3pt}{3pt}{\color{red}}{}{\bf}{}{.5em}{}
%\theoremstyle{exercise}
%\newtheorem{exer}[theorem]{Exercise}
%
%\fi%\lecturenotes1

\newcommand{\TODO}[1]{\marginpar{\scriptsize\bB \textbf{TODO:} #1\e}}

\newcommand{\TODOa}[1]{\marginpar{\scriptsize\bM \textbf{TODO:} #1\e}}
\newcommand{\TODOb}[1]{\marginpar{\scriptsize\bB \textbf{TODO:} #1\e}}

\newcommand{\COMMa}[1]{\marginpar{\scriptsize\bM \textbf{COMM:} #1\e}}
\newcommand{\COMMb}[1]{\marginpar{\scriptsize\bB \textbf{COMM:} #1\e}}

\newcommand{\CHECK}[1]{\marginpar{\scriptsize\bR \textbf{CHECK:} #1\e}}

\hyphenation{line-break line-breaks docu-ment triangle cambridge amsthdoc
  cambridgemods baseline-skip author authors cambridgestyle en-vir-on-ment polar}

%% helper macros

% \usepackage[color,leftbars]{changebar}
\usepackage[color]{changebar}

%begin Bob's
\usepackage{color}
\def\bR{\begin{color}{red}}
\def\bB{\begin{color}{blue}}
\def\bM{\begin{color}{magenta}}
\def\bC{\begin{color}{cyan}}
\def\bW{\begin{color}{white}}
\def\bBl{\begin{color}{black}}
\def\bG{\begin{color}{green}}
\def\bY{\begin{color}{yellow}}
\def\e{\end{color}\xspace}
\newcommand{\bit}{\begin{itemize}}
\newcommand{\eit}{\end{itemize}\par\noindent}
\newcommand{\ben}{\begin{enumerate}}
\newcommand{\een}{\end{enumerate}\par\noindent}
\newcommand{\beq}{\begin{equation}}
\newcommand{\eeq}{\end{equation}\par\noindent}
\newcommand{\beqa}{\begin{eqnarray*}}
\newcommand{\eeqa}{\end{eqnarray*}\par\noindent}
\newcommand{\beqn}{\begin{eqnarray}}
\newcommand{\eeqn}{\end{eqnarray}\par\noindent}

%end Bob's

\if\lecturenotes1

% kill our notes

\renewcommand{\TODO}[1]{}
\renewcommand{\TODOa}[1]{}
\renewcommand{\TODOb}[1]{}
\renewcommand{\COMMa}[1]{}
\renewcommand{\COMMb}[1]{}
\renewcommand{\CHECK}[1]{}

% hide certain colours

\def\bR{\begin{color}{black}}
\def\bB{\begin{color}{black}}
\def\bM{\begin{color}{black}}
\def\bC{\begin{color}{black}}
\def\bW{\begin{color}{black}}
\def\bG{\begin{color}{black}}
\def\bY{\begin{color}{black}}

\fi%\lecturenotes1

\begin{document}

\title{Time symmetry in quantum theories and beyond}
\date{\today}

\author{John H. Selby}
\affiliation{International Centre for Theory of Quantum Technologies, University of Gda\'nsk, 80-308 Gda\'nsk, Poland}
\email{john.h.selby@gmail.com}
\author{Maria E. Stasinou}
\affiliation{Deutsches Elektronen-Synchrotron DESY, Platanenalle 6, 15738 Zeuthen, Germany}
\affiliation{International Centre for Theory of Quantum Technologies, University of Gda\'nsk, 80-308 Gda\'nsk, Poland}
\email{stasinoumar@gmail.com}
\author{Stefano Gogioso}
\affiliation{Hashberg Ltd, London, UK}
\email{stefano.gogioso@cs.ox.ac.uk}
\author{Bob Coecke}
\affiliation{Quantinuum, 17 Beaumont street, OX1 2NA Oxford, UK}
\email{bob.coecke@cambridgequantum.com}
\maketitle

 \begin{abstract}
There is a stark tension among different formulations of quantum theory in that some are fundamentally time-symmetric and others are radically time-asymmetric. This tension is crisply captured when thinking of physical theories as theories of processes. We review process theories and their diagrammatic representation, and show how quantum theory can be described in this language. The tension between time-symmetry and time-asymmetry is then captured by the tension between two of the key structures that are used in this framework. On the one hand, the symmetry is captured by a \emph{dagger structure}, which is represented by a reflection of diagrams. On the other hand, the asymmetry is captured by a condition involving \emph{discarding} which, ultimately, is responsible for the theory being compatible with relativistic causality.   

Next we consider three different ways in which this tension can be resolved. The first of these is closely related to recent work of Lucien Hardy, where the tension is resolved by adding in a time reversed version of discarding together with a suitable consistency condition.  The second is, to our knowledge, a  new approach. Here the tension is resolved by adding in new systems which propagate backwards in time, and imposing a consistency condition to avoid running into well known time-travel paradoxes. 
The final approach that we explore is closely related to work of Oreshkov and Cerf, where the tension is resolved by removing the constraint associated with discarding. We show two equivalent ways in which this can be done whilst ensuring that the resulting theory still makes sensible operational predictions.
\end{abstract}

\newpage

\tableofcontents

\newpage

\allowdisplaybreaks

\section{Introduction}

It is a commonly held belief that quantum theory, or indeed nature itself, is time-symmetric. This claim, however, requires qualification -- whilst the unitary dynamics of the theory may be time-symmetric \cite{wigner1931gruppentheorie,luders1954equivalence,bell1955time}, the theory as a whole certainly is not \cite{aharonov1964time,watanabe1955symmetry,holster2003criterion}.  In fact, from this broader perspective, we find that time symmetry is broken in a rather dramatic way. This is exemplified by the fact that while there are many pure states in the theory -- corresponding to rays in a Hilbert space -- there are no pure effects that can be realised without invoking post-selection, i.e.,~conditioning on the outcomes when performing measurements. Beyond the pure theory, we find that there is a single effect -- discarding (i.e., the (partial) trace) -- that can be implemented without post-selection whilst every density matrix corresponds to a preparable state. Process-theoretically, as demonstrated in  \cite{coecke2017time}, the time reverse of quantum theory is a remarkably different theory: it has only a single state and describes a theory of eternal noise.

There are several attempts in the literature to re-formulate these time-asymmetric theories in an explicitly time-symmetric way. There are many different motivations for doing so. Firstly, for philosophical reasons we may believe that nature should not have a preferred direction of time at a fundamental level, and hence, quantum theory as a fundamental description of nature should be time-symmetric. If this is the case, then the asymmetry that we observe in our experiments must be an emergent phenomena, perhaps arising due to particular choices of boundary conditions for the universe \cite{oreshkov2015operational}. Secondly, in our attempts to reconcile the theories of quantum theory and gravity there are hints that modifying the role of time in quantum theory will be essential. In particular, it may be the case that we cannot have a predetermined causal structure and so the `past' and `future' could be inextricably mixed \cite{HardyCausaloid}. If that is the case, formulating quantum theory in a time-neutral way may be essential to making progress on the unification of quantum theory and general relativity. 
Thirdly, if we take the view that quantum theory is a generalisation of Bayesian probability theory, that is that quantum states are epistemic rather than ontic, then it is natural to want to treat space and time on an equal footing \cite{leifer2013towards}. This requires that we find a time-neutral formulation of the theory and that the apparent time-asymmetry arise from how we choose to apply the theory to physical scenarios rather than being fundamental.

Without necessarily subscribing to any of these reasons, in this paper, we consider the possible approaches to describing quantum theory in a time-symmetric/neutral way. We see that many existing works in this direction \cite{oeckl2008general,
oeckl2016local,
oreshkov2016operational,
oreshkov2015operational,
aharonov2009multiple,
aharonov2008two} all have a particularly concise process-theoretic description. Moreover, as we show, they all correspond to essentially the same process theory, differing only in convention and philosophical perspective rather than in their underlying mathematics. 

We also see that there are various other ways to symmetrise quantum theory. We consider two such ways, one of which has close connections to the work in \cite{di2020quantum,hardy2021time} whilst the other is, to our knowledge, a  new approach inspired by the view of antiparticles as being particles travelling back in time.

It is worth commenting further on the process-theoretic nature of our construction. In particular, it means that whilst we are going to focus on quantum theory as a concrete example in this paper, these constructions can be applied more generally, for example, to arbitrary generalised probabilistic theories (GPTs) \cite{HardyAxiom,Barrett}, operational probabilistic theories (OPTs) \cite{Chiri1,Chiri2,d2017quantum},  categorical probabilistic theories \cite{gogioso2017categorical}, or even theories that make no reference to probabilities at all (as have often been considered in the categorical quantum mechanics literature \cite{AC1,cespek,CDKZ2,gogioso2017fantastic}). One interesting feature to highlight, is that the symmeterised version of a theory will often be outside of the framework used to describe the original theory. For example, a time symmeterised GPT is typically not itself a GPT. This showcases the utility of process theories as a flexible and widely applicable framework for exploring conceivable physical theories.  

The remainder of the paper is structured as follows. In section~\ref{sec:PT} we introduce the basics of the framework of process theories. In particular, we define the key concepts which underpin this work namely causality, time symmetry, and time neutrality. We illustrate these abstract definitions with the concrete example of quantum theory. In section~\ref{sec:3TS} we explore different approaches that one could take towards obtaining a time-symmetric version of quantum theory.  In section~\ref{sec:2TN} we recast various other approaches to time symmetry and neutrality in two equivalent ways. The first is a direct adaptation of the existing literature whilst the second is a more elegant and broadly applicable way to capture the same idea. 
We end this section by discussing connections to higher-order processes \cite{chiribella2008quantum,oreshkov2012quantum,bisio2019theoretical,uijlen2019categorical,wilson2022mathematical}. 

This paper is intended as a broad overview of the possibilities of studying time symmetry and neutrality from a process-theoretic perspective. Each approach deserves further study in its own right. As such, we will try to highlight various open problems and research directions throughout.

\section{Process theories}\label{sec:PT}

A process theory \cite{coecke2011universe,CKpaperI,CKbook,reconstruction,selby2017process} comprises of a collection of \emph{processes}. For example,
\beq\tikzfig{figures/process}\eeq
is a process with \emph{input systems} $A$ and $B$ and \emph{output systems} $A$, $C$, and $C$. Moreover, this collection of processes must be closed under forming \emph{diagrams} by wiring processes together. For example the diagram,
\beq\tikzfig{figures/diagram},\eeq
corresponds to another process in the theory with inputs $D$ and $A$ and outputs $A$, $E$, and $C$.
This wiring is subject to the following constraints:
\ben\addtolength{\itemsep}{-0.5\baselineskip}
\item[i.] outputs cannot be wired to outputs, and inputs cannot be wired to inputs;
\item[ii.] the wiring is acyclic;
\item[iii.] systems can only be connected if their system types match.
\een
Moreover, we have a notion of diagrammatic equality: two diagrams are equal if they represent the same wiring. For example,
\beq\tikzfig{figures/diagram}\qquad=\qquad\tikzfig{figures/diagramRedrawn}.\eeq
In other words, the precise layout of processes on the page is not important: all that matters is their \emph{connectivity}.

\begin{remark} In categorical quantum mechanics \cite{AC1,AC3} every symmetric  monoidal category (SMC) is taken to be a process theory -- subject to Mac Lane's strictification theorem \cite{MacLane}.  Following \cite{patterson2021wiring}, we also proposed certain operad algebras as process theories \cite{operads}. 
\end{remark}

\subsection{Example 1: quantum physics}

To illustrate this abstract definition we introduce a process theory that describes finite-dimensional quantum theory, $\QPhys$. There are two different kinds of systems in $\QPhys$: quantum systems, denoted by purple wires and labelled by finite-dimensional Hilbert spaces $\mathcal{H}$, and classical systems, denoted by grey wires and labelled by finite sets, $\mathds{A}$. The quantum systems are the fundamental systems of interest within quantum theory, whilst the classical systems represent how we interact with the quantum world. For example, they represent the control variables on experimental devices, and the pointers on measurement apparatus that encode the outcome of the measurement.
This is the setting previously considered by \cite{coecke2016cpstar,selby2017leaks,gogioso2017categorical,tull2018categorical,reconstruction}, amongst other works.

Following these conventions, a general quantum instrument, $\mathcal{E}$, is denoted by,
\beq\tikzfig{figures/cqProcessf}.\eeq
This has a quantum system, $\mathcal{H}$, together with a classical control variable, $\mathds{X}$, as inputs and a quantum system, $\mathcal{K}$, together with a classical outcome variable, $\mathds{A}$, as outputs. Formally this can be understood as a completely positive trace preserving (CPTP) map between complex matrix algebras,
\beq
\mathcal{E}:  \mathcal{B}[\mathcal{H}] \otimes \left(\bigoplus_{x\in \mathds{X}} \mathcal{B}[\mathds{C}]\right) \to \mathcal{B}[\mathcal{K}] \otimes \left(\bigoplus_{y\in \mathds{Y}} \mathcal{B}[\mathds{C}]\right).
\eeq
 This general definition looks somewhat complicated, but it can be well understood by considering more familiar special cases.

If there is no classical input or output then we formally treat this as the singleton set $\star:=\{*\}$. In this case we have that $\mathcal{B[H]}\otimes \left(\bigoplus_{*\in\star}\mathcal{B}[\mathds{C}]\right) \cong \mathcal{B[H]}$ which is the standard  noncommutative matrix algebra associated to a quantum system $\mathcal{H}$. In contrast, if there is no quantum input or output then we formally treat this as the one-dimensional Hilbert space $\mathds{C}$.  In this case we have that $\mathcal{B}[\mathds{C}] \otimes\left(\bigoplus_{x\in \mathds{X}} \mathcal{B}[\mathds{C}]\right) \cong \bigoplus_{x\in \mathds{X}} \mathcal{B}[\mathds{C}] $ which is a commutative matrix algebra which is commonly associated to a classical system.

A process with no inputs and only a quantum output,
\beq
\tikzfig{figures/quantState},
\eeq
therefore corresponds to a CPTP map, $\rho:\mathcal{B}[\mathds{C}] \to \mathcal{B}[\mathcal{H}]$. These are in one-to-one correspondence with elements of $\mathcal{B}[\mathcal{H}]$ with unit-trace, i.e., quantum states.

A process with a quantum input and a classical output,
\beq
\tikzfig{figures/quantMeas},
\eeq
corresponds to a CPTP map, $M:\mathcal{B}[\mathcal{H}] \to \bigoplus_{a\in\mathds{A}}\mathcal{B}[\mathds{C}]$. These are in one-to-one correspondence with sets of positive operators in $\mathcal{B}[\mathcal{H}]$ indexed by $a\in \mathds{A}$, $\{M_a\}_{a\in \mathds{A}}$, such that $\sum_{a\in\mathds{A}} M_a =\mathds{1}_\mathcal{H}$. In other words, they are in one-to-one correspondence with destructive POVM measurements. 

When we compose a quantum state with a quantum measurement then we end up with a process which has only classical outputs,
\beq
\tikzfig{figures/quantProbDist2}\quad :=\quad \tikzfig{figures/quantProbDist},
\eeq
corresponding to a CPTP map $p: \mathcal{B}[\mathds{C}] \to \bigoplus_{a\in\mathds{A}}\mathcal{B}[\mathds{C}]$. This is simply the sequential composition of the CPTP maps $\rho$ and $M$, i.e., $p=M\circ \rho$. Processes of this form are in one-to-one correspondence with probability distributions over the set $\mathds{A}$ and in particular, to the probability distribution defined by $p(a)=\mathsf{tr}(M_a \rho)$ for all $a\in \mathds{A}$. Hence, we see that the Born rule is encoded as a special case of how CPTP maps compose in sequence.

More generally, a process with only classical inputs and outputs,
\beq
\tikzfig{figures/classProc},
\eeq
corresponds to a CPTP map, $S:\bigoplus_{x\in \mathds{X}}\mathcal{B}[\mathds{C}]\to \bigoplus_{a\in \mathds{A}}\mathcal{B}[\mathds{C}]$. These are in one-to-one correspondence with stochastic maps from $\mathds{X}$ to $\mathds{A}$. They therefore map probability distributions over $\mathds{X}$ to probability distributions over $\mathds{A}$ via
\beq
\tikzfig{figures/classProc2}\quad \mapsto\quad \tikzfig{figures/classProc1}.
\eeq

Some other special cases which are common in the literature are
\beq
\tikzfig{figures/quantNDM}\ ,\quad \tikzfig{figures/quantCCP}\ , \quad \text{and}\quad \tikzfig{figures/quantCCT},
\eeq
which describe non-destructive measurements, classically controlled state preparations, and classically controlled CPTP maps respectively.

Finally, let us consider the set of processes with no outputs,
\beq
\tikzfig{figures/quantEff}.
\eeq
They corresponds to CPTP maps $E:\mathcal{B}[\mathcal{H}] \otimes \left(\bigoplus_{x\in \mathds{X}} \mathcal{B}[\mathds{C}]\right)\to \mathcal{B}[\mathds{C}]$. Note, however, that, due to the trace-preservation condition, these processes are unique. On the quantum system they correspond to the (partial) trace whilst on the classical system they correspond to marginalisation. Since they are unique we introduce a special symbol to denote them:
\beq
\tikzfig{figures/quantDiscard}.
\eeq
A special case of the above process is when there is also no input system. Then, there is a unique process with neither input nor output which corresponds to the unique CPTP map from $\mathcal{B}[\mathds{C}]$ to itself. This can be thought of as the scalar $1$ which maps, for example, $\rho \mapsto 1\cdot \rho = \rho$. We diagrammatically denote the scalar $1$ by the empty diagram,
\beq
\tikzfig{figures/empty}\ .
\eeq

The process-theoretic presentation of quantum theory is a time-asymmetric theory: every system has a large number of states, i.e., density matrices, but only a single effect, i.e., the (partial) trace. Indeed, as demonstrated in an earlier paper,   \cite{coecke2017time}, the time reverse of $\QPhys$ is a remarkably different theory: it has only a single state and describes a theory of eternal noise. We abstractly characterise theories that have a trace-preservation property in section~\ref{sec:causalPT}, and formalise the notion of time symmetry which they (typically) violate in section~\ref{sec:timesymmetric}.

\subsection{Deterministic and causal process theories}\label{sec:causalPT}

The fact that quantum theory has a single scalar, namely the empty diagram, means that $\QPhys$ is a deterministic process theory:

\begin{definition}[Determinism]\label{def:determinism}
We say that a process theory which has only a single scalar, the empty diagram
\beq
\tikzfig{figures/empty}\ ,
\eeq
is a deterministic process theory.
\end{definition}
The reason for this terminology, is that such theories describe only processes that can be realised with certainty. That is, they do not capture processes that occur only as one possibility amongst many. 

For example, in $\QPhys$ determinism means that we must consider measurements as channels from a quantum to a classical system, as this is something that we can choose to implement in the lab. One can also think of measurements as being represented by a collection of POVM elements. The individual POVM elements, however, are not part of $\QPhys$ as they are not things that we can choose to implement -- we have no control over which possibility will occur.

The trace-preservation condition for quantum processes in $\QPhys$ can also be elegantly expressed
in process-theoretic terms. In short, it says that there is a unique way to go from something to nothing. More formally,  for any system $A$ there is a unique process with $A$ as an input with no outputs. These are often known as \emph{discarding} maps as they often correspond to simply throwing a way or ignoring a system. We, like in the case of $\QPhys$, denote these by:
\beq\label{eq:discardA}
\tikzfig{figures/discardA}.
\eeq
\begin{definition}[Causality \cite{Chiri1, CRCaucat,Cnonsig}]\label{def:causal}
We say that any process theory which, for each system $A$, has a unique process from $A$ to nothing, is a \emph{causal} process theory. 
\end{definition}

Many important conditions immediately follow for any causal process theory:
\begin{itemize}
\item[i.] Discarding a composite system is the same as discarding the components:
\beq
\tikzfig{figures/discAB}\quad =\quad \tikzfig{figures/discA} \tikzfig{figures/discB}.
\eeq
This means that our theory has a well-defined subsystem structure, as, for example, it gives a unique way to define local states given a global state.
\item[ii.] Discarding the outputs of any process is the same as discarding the inputs:
\beq \label{eq:causality}
\tikzfig{figures/causalityEquation}
\eeq
This is known as \emph{discard preservation} and is the abstract characterisation of trace preservation of quantum theory.
\item[iii.] There is a unique scalar, and, hence, the theory is \emph{deterministic}.
This follows as  the special case of the (unique) discarding effect with a trivial input. 
\end{itemize}

Whilst we have just seen that causality implies determinism, the converse is not necessarily true. Indeed, in section~\ref{sec:determinism} we define a process theory that is deterministic but not causal.

The terminology `causality' is used because the above condition ensures that the theory is non-signalling between causally separated regions \cite{Cnonsig} and, more generally, is compatible with relativistic causal structure \cite{kissinger2017equivalence}. For example, suppose we have two space-like separated parties which, hence, cannot directly signal to one another. They may, however, have some common past and future where their light cones intersect. This would be represented by a diagram of the shape:
\beq
\tikzfig{figures/dok}.
\eeq
Then, to describe the local physics from the perspective of the left-hand party we discard the right-hand output as it is inaccessible to them:
\beq
\tikzfig{figures/dok1}.
\eeq
Using now the fact that the only effect is the discarding map, and that $f_\beta$ satisfies Eq.~\eqref{eq:causality}, we find that:
\beq
\tikzfig{figures/dok1}\ \ =\ \ \tikzfig{figures/dok2} \ \ =\ \  \tikzfig{figures/dok3}.
\eeq
That is, the right-hand input is simply discarded and is entirely disconnected from the left-hand side. In particular, any control that the right-hand party may have had over the input cannot have any influence over what the left-hand party could observe at their output.

\begin{remark}
In more category theoretic language, a causal process theory can be thought of as a symmetric monoidal category in which the monoidal unit is terminal \cite{Cnonsig}.
\end{remark}

\subsection{Time-reversed process theories}

To understand time symmetry we must first consider what it means to time reverse a process theory. That is, we must understand what transformation we are demanding the theory to be invariant under. 

For any process theory, $\mathbf{Proc}$, there is a very simple way to reverse the arrow of time. We simply read the diagrams in the theory from top to bottom rather than bottom to top. The inputs of a process are then considered outputs and the outputs are considered inputs. It is often, however, more convenient to take an active view of time reversal -- that is, we keep the bottom to the top reading of the diagrams but we flip all of the diagrams upside down. This defines a new process theory, which we denote by $\mathbf{Proc}_{R}$. This has the same systems and processes as $\mathbf{Proc}$, but all of these processes go in the opposite direction. That is,
\beq
\tikzfig{figures/processDagger} \in \mathbf{Proc}_{R}\quad \iff\quad 
\tikzfig{figures/process} \in \mathbf{Proc}.
\eeq
Moreover, composition in $\mathbf{Proc}_R$ can be defined as the time reverse of composing the original processes in $\mathbf{Proc}$. This means, for example, that
\beq
\tikzfig{figures/reversedComp1} =\ \ \tikzfig{figures/reversedComp2}\ \  \in \mathbf{Proc}_R \quad \iff \quad \tikzfig{figures/reversedComp3} =\ \ \tikzfig{figures/reversedComp4}\ \  \in \mathbf{Proc}.
\eeq 
It is straightforward to see that time-reversing a theory twice leaves it invariant. That is, ${\mathbf{Proc}_R}_R = \mathbf{Proc}$.

\begin{remark} Categorically time reversal is simply the standard contravariant functor from a category to the opposite category $R:\mathbf{Proc}\to \mathbf{Proc}_R$. 
\end{remark}

If the process theory, $\mathbf{Proc}$, that we start with is a causal process theory, then, as noted in   \cite{coecke2017time}, the time-reversed theory will, typically, be remarkably different. It describes a theory with a single state for every system, which is left invariant by every transformation -- it is a theory of eternal noise. 

\begin{definition}[Retrocausality]\label{def:retrocausal}
We say that any process theory which, for each system $A$, has a unique process from nothing to $A$, is a \emph{retrocausal} process theory. 
\end{definition}
In this somewhat less poetic language, the result of  \cite{coecke2017time} states that the time reverse of a causal theory is a retrocausal theory. In particular, in any retrocausal theory there will be a unique state for each system, which we denote as:
\beq
\tikzfig{figures/noiseA}.
\eeq
This, in many theories, can be thought of as a state of uniform noise. Uniqueness of this state implies that they compose as:
\beq
\tikzfig{figures/noiseAB} \quad = \quad \tikzfig{figures/noiseA}\tikzfig{figures/noiseB},
\eeq
and, moreover, that every other process will satisfy a \emph{retrocausality} constraint:
\beq\label{eq:retrocausality}
\tikzfig{figures/retrocausalityEquation}.
\eeq
Finally, note that, like causal process theories, retrocausal process theories are necessarily deterministic (in the sense of Def.~\ref{def:determinism}).

\subsection{Time-symmetric process theories}\label{sec:timesymmetric}
 Refs.~\cite{oeckl2008general,
oeckl2016local,
oreshkov2016operational,
oreshkov2015operational,
aharonov2009multiple,
aharonov2008two} formulate quantum theory in a time-symmetric way. From a process-theoretic perspective, this means that the process theory is the same as the time-reversed theory. Equivalently, this means that time-reversal is \emph{internal} to the process theory. Formally, we capture this by saying that a process theory is time-symmetric if and only if it permits a \emph{dagger} \cite{AC1,AC2,SelingerCPM}. 

\begin{definition}[Dagger]\label{def:dagger}
A dagger is a map, $\dagger$, from the process theory to itself that \emph{reflects diagrams}. Specifically, it acts on processes as
\beq\tikzfig{figures/process} \ \underset{\dagger}{\stackrel{\dagger}{\leftrightarrows}} \ \  \tikzfig{figures/processDagger},\eeq
and moreover, on diagrams as
\beq\tikzfig{figures/diagram} \ \ \ \underset{\dagger}{\stackrel{\dagger}{\leftrightarrows}} \ \ \ \  \tikzfig{figures/diagramDagger}.\eeq
\end{definition}

To understand the connection to time symmetry, note that if a process theory has a dagger, then for every process $f:A\to B$ in the theory there exists a process $f^\dagger:B\to A$ (that is, the symbolic notation for the upside-down $f$). The process, $f^\dagger$, can be interpreted as the time-reversed version of the process, $f$. Moreover, it is such that wiring together the time-reversed processes in the reverse order is the same as taking the time-reverse of the process representing the diagram. Hence, we say that process theories with daggers are time-symmetric.

Note that this is not the same as demanding that every process in the theory is time-symmetric\footnote{That would be the case if we additionally require that $f=f^\dagger$, and, hence, that there are no processes in which the inputs differed from the outputs.}. Instead, we are imposing the time symmetry at the level of the theory as a whole. Indeed, it is in this sense that the theories of \cite{aharonov2008two,aharonov2009multiple,oeckl2008general,oreshkov2015operational,silva2017connecting,silva2014pre,oreshkov2016operational,oeckl2016local} can be considered to be time-symmetric.

\begin{remark} If the process theory is representing an SMC then any involutive contravariant endofunctor $\dagger:\mathbf{Proc}\to\mathbf{Proc}$, which acts as the identity on objects provides a dagger for the process theory. This allows us to see that the existence of a dagger implies that $\mathbf{Proc}$ and $\mathbf{Proc}_R$ are covariantly isomorphic as we can define the covariant isomorphism simply by $R\circ \dagger : \mathbf{Proc} \to \mathbf{Proc}_R$, which is covariant as $\dagger$ and $R$ are each contravariant. Note that $\mathbf{Proc}$ and $\mathbf{Proc}_R$ are, by definition, contravariantly isomorphic, so the covariance here is the key to having a time-symmetric theory.
\end{remark}

\begin{theorem}{\cite[Thm. 3]{coecke2017time}}\label{thm:eternalnoise}
A time-symmetric theory is causal if and only if it is retrocausal.
\end{theorem}
This is because there is a unique state for a system given by the dagger of the discarding map:
\beq
\tikzfig{figures/noiseA} \ \ :=\ \ \dagger\left( \tikzfig{figures/discA} \right).  
\eeq

It is therefore immediately clear, as noted in  \cite{coecke2017time}, that $\QPhys$ is not a time-symmetric theory -- it is causal but not retrocausal. To see this it suffices to note that there are multiple distinct states for any (non-trivial) system, e.g., the computational basis states of a qubit:
\beq
\tikzfig{figures/quantComp1}\quad \neq \quad \tikzfig{figures/quantComp2}.
\eeq

\subsection{Time-neutral process theories}

In the work of  \cite{oreshkov2016operational}, the authors aim to go a step beyond the time symmetry that we have just described and instead want to have a version of quantum theory that is \emph{time-neutral}. Process-theoretically this means that we want to forget about the distinction between inputs and outputs entirely. This is possible if the theory has \emph{cups \& caps} \cite{AC1, Kindergarten}.

\begin{definition}[Cups \& caps] A \emph{cup} is a process, $\bigcup$, in a process theory that can be represented as a bent piece of wire,
\beq\tikzfig{figures/cupProcess}\ \ =:\ \ \tikzfig{figures/cupWire},\eeq
which allows us to connect inputs to inputs. Similarly a \emph{cap} is a process, $\bigcap$, which can be represented as a wire bent in the other direction,
\beq\tikzfig{figures/capProcess}\ \ =:\ \ \tikzfig{figures/capWire},\eeq
which allows us to connect outputs to outputs.  For our representation of these processes as bent wires to actually make sense, that is, for it to be compatible with the idea that only the connectivity of diagrams matters, these processes must satisfy the following conditions:
\beq\label{eq:snake}\tikzfig{figures/snake}\quad\qquad\&\quad\qquad\tikzfig{figures/cupSymmetry}.\eeq
\end{definition}

\begin{remark}
There are other conditions that one would expect these two to satisfy such as,
\beq\tikzfig{figures/capSymmetry},\eeq
it turns out, however, that these can be derived from the two above and so do not need to be additionally imposed.
\end{remark}

Time-neutrality goes a step further than time-symmetry as it allows for freely interchanging some of the inputs with outputs and vice versa. This is in contrast to time-symmetry which requires flipping all of them at once. What this means is that time neutral theories allow for a freer notion of wiring which neglects the input-output structure. For example, the following diagram is permissible:
\beq\tikzfig{figures/timeNeutralDiagram}.\eeq
Theories with cups \& caps are therefore said to be time-neutral as there is no real distinction between the inputs and outputs of a process\footnote{Recall that in the definition or a process theory, section~\ref{sec:PT}, that the only distinction between the inputs and the outputs of a process is that outputs cannot be wired to outputs and inputs cannot be wired to inputs, and so cups \& caps allow us to circumvent this restriction.}. Moreover, there is no meaningful causal order that can be  assigned to the processes within general diagrams as this freer notion of wiring allows for cycles. For example, in the diagram,
\beq
\tikzfig{figures/timeNeutralCycle},
\eeq
we have the situation where $f$ is both in the `causal future' and the `causal past' of $g$. 

This process-theoretic notion of a time-neutral theory is a stronger notion than that of a time-symmetric theory. That is, any time-neutral theory is also necessarily time-symmetric, as we can define a dagger using cups \& caps,
\beq
\dagger \left( \tikzfig{figures/daggerFromCompact}\right) \quad :=\quad  \tikzfig{figures/daggerFromCompact1}.
\eeq
It can easily be seen that this does indeed satisfy the constraints of Def.~\ref{def:dagger} \cite{CKbook}.

\begin{theorem}
If a time-neutral theory is causal then there only can be only a single process between any two systems.
\end{theorem}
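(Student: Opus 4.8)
The plan is to exploit \emph{process--state duality}: in a time-neutral theory the cups \& caps let us bend wires freely, and I will use this to convert an arbitrary process into an effect, where causality pins it down uniquely. Concretely, I would fix two systems $A$ and $B$ and an arbitrary process $f\colon A \to B$, and show that $f$ is forced to equal one fixed process that does not depend on $f$ at all.

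First I would bend the output wire $B$ of $f$ downwards using a cap, turning $f$ into a process with inputs $A$ and $B$ and no outputs, i.e.\ an effect on the composite system $A\otimes B$ (the cup and cap on $A\otimes B$ being obtained by tensoring those on $A$ and on $B$). By causality (Def.~\ref{def:causal}) there is a \emph{unique} effect from any system to nothing, so this bent-down process must be the discarding map on $A\otimes B$ --- and crucially this is the same discarding map regardless of which $f$ we started from. Next I would bend the same $B$ leg back upwards using a cup. By the snake equation~\eqref{eq:snake}, bending a wire down and then back up is the identity, so this operation returns exactly the original $f$. Chaining the two steps, $f$ equals the result of bending the discarding map on $A\otimes B$ back up into a process $A\to B$, an expression in which $f$ no longer appears. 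Hence any two processes $f,g\colon A\to B$ coincide, and since at least one process $A\to B$ exists (e.g.\ the one just constructed from the discarding map), there is exactly one.

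Stated abstractly, the bend-down operation has the bend-up operation as a left inverse via~\eqref{eq:snake}, hence is injective, so $\mathrm{Hom}(A,B)$ injects into the one-element set of effects on $A\otimes B$. The same conclusion can be reached more indirectly: cups \& caps supply a dagger, so the theory is time-symmetric, and Theorem~\ref{thm:eternalnoise} then makes it retrocausal, giving a unique \emph{state} per system; bending $f$ into a state on $A\otimes B$ rather than an effect finishes the argument symmetrically.

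I do not expect a genuine obstacle here --- the content is essentially the observation that compact structure plus uniqueness of effects collapses every hom-set. The only points requiring care are bookkeeping ones: checking that the bend-down map really has the bend-up map as a one-sided inverse via~\eqref{eq:snake} (and not merely a retraction in the wrong direction), and confirming that the cups \& caps at the composite type $A\otimes B$ behave as the tensor of those at $A$ and $B$, so that the single discarding effect of the causal theory can legitimately be invoked. Both are immediate from the defining equations of cups \& caps and from the compositional behaviour of the discarding maps recorded in item~i.\ just after Def.~\ref{def:causal}.
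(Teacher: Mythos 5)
Your proof is correct, but it follows a genuinely different route from the paper's. The paper first converts time neutrality into time symmetry (cups \& caps give a dagger), invokes Theorem~\ref{thm:eternalnoise} to conclude that causality implies retrocausality, and then identifies the cap with $\discard_A\otimes\discard_A$ and the cup with the tensor of unique noise states; substituting these into the snake equation forces the identity wire to factorise through the trivial system, after which every $f\colon A\to B$ collapses to the noise state of $B$ composed after discarding $A$. You instead argue by process--effect duality: bend the output of $f$ down with a cap, observe that causality alone pins the resulting effect on $A\otimes B$ to the discarding map independently of $f$, and bend back up with a cup, the snake equation guaranteeing that this recovers $f$. Your main argument is more economical --- it never needs the dagger, retrocausality, or Theorem~\ref{thm:eternalnoise}, only uniqueness of effects plus the compact structure --- and it cleanly exhibits $\mathrm{Hom}(A,B)$ as injecting into a singleton. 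What the paper's longer route buys is the explicit form of the unique surviving process, namely ``discard then prepare noise,'' which makes the eternal-noise interpretation of the collapsed theory manifest; your unique process (the bent-up discard) is of course the same map, but you never need to name it. Your two flagged bookkeeping points are indeed the only delicate ones, and both are fine: the one-sided snake identity in the direction you need follows from Eq.~\eqref{eq:snake} together with the cup-symmetry condition (as the paper's remark on derivable cap symmetry indicates), and you do not in fact need the cup or cap on the composite $A\otimes B$ at all --- you only ever bend the single wire $B$, and causality applies to the composite system directly.
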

\begin{proof}
As time neutrality implies time symmetry, we immediately have that causality implies retrocausality. Hence, for every system we have a unique state and effect. In particular, this means that,
\beq
\tikzfig{figures/capWire}\ \ =\ \  \tikzfig{figures/doubleDiscardA}\qquad\text{and}\qquad \tikzfig{figures/cupWire} \ \ = \ \ \tikzfig{figures/doubleNoiseA}.
\eeq
Equation \eqref{eq:snake} then implies that:
\beq
\tikzfig{figures/identityFactorise0} \ \ = \ \ \tikzfig{figures/identityFactorise} \ \ = \ \ \tikzfig{figures/identityFactorise1} \ \ = \ \ \tikzfig{figures/identityFactorise2}.
\eeq
From the above we find that for any $f$ we have:
\beq
\tikzfig{figures/uniqueProcess} \ \ = \ \ \tikzfig{figures/uniqueProcess1} \ \ = \ \ \tikzfig{figures/uniqueProcess2} \ \ = \ \ \tikzfig{figures/uniqueProcess3} \ \ = \ \ \tikzfig{figures/uniqueProcess4}.
\eeq
As this holds for any process $f$ we have therefore shown that there is a unique process per pair of systems in the theory, namely:
\beq
\tikzfig{figures/uniqueProcess4}.
\eeq
\end{proof}

It is therefore clear that the process theory $\QPhys$ is not a time-neutral theory as there are multiple distinct processes with the same inputs and outputs. For example:
\beq
\tikzfig{figures/quantIdent} \quad \neq \quad \tikzfig{figures/quantNoiseMap}.
\eeq

\begin{remark} Categorically cups \& caps correspond to the unit and counit in a compact closed category in which the objects are equal to their dual.\end{remark}

One could argue that process theories with cups \& caps are not truly time-neutral as the individual processes still have a distinction between input and output systems, even if we can now freely interchange them. In  \cite{operads} we show how to go to fully time-neutral process theories, where there is no distinction between inputs and outputs even at the level of a single process.

\subsection{Example 2: quantum calculations}

We have seen that our first key example, $\QPhys$, is a causal process theory that fails to be either time-neutral or time-symmetric. There is, however, a closely related theory $\QCalc$, which is a super theory of $\QPhys$ that is both time-symmetric and time-neutral. This is the theory in which we often perform calculations about quantum physics -- for example, when computing the probability of a measurement outcome.

The systems in $\QCalc$ are the same as those of $\QPhys$ and so we will use the same diagrammatic notation for them as we did in $\QPhys$. The entire difference between the two theories is a simple modification to the definition of the processes. That is, in $\QPhys$ processes were defined as CPTP maps, whilst in $\QCalc$ we drop the trace preservation condition and allow for arbitrary completely positive maps. In particular, the `states' of quantum systems, $\mathcal{H}$, are given by arbitrary positive operators and so are not necessarily unit-trace, and the `states' of classical systems, $\mathds{X}$, are arbitrary functions over $\mathds{X}$ valued in $\mathds{R}^+$ rather than probability distributions over $\mathds{X}$.

Notably, this theory is not a causal nor a deterministic process theory. Indeed, there are many processes which have a system $\mathcal{H}$ as an input and no output,
\beq
\tikzfig{figures/quantCPEffect}.
\eeq
These correspond to CP maps $e:\mathcal{B[H]}\to \mathcal{B}[\mathds{C}]$, which can be thought of as positive linear functionals on the cone of positive operators on $\mathcal{H}$. This includes, but is not limited to, the quantum effects $\mathsf{tr}(\rho\ \cdot\ )$. Similarly, many processes have a system $\mathds{X}$ as an input and no output,
\beq
\tikzfig{figures/classCPEffect}.
\eeq	
These correspond to CP maps $r:\bigoplus_{x\in \mathds{X}} \mathcal{B}[\mathds{C}] \to \mathcal{B}[\mathds{C}]$, which can be thought of as positive linear functionals on the cone of $\mathds{R}^+$ valued functions over $\mathds{X}$. 

It was the causality of $\QPhys$ which we leveraged to demonstrate that it is neither time neutral nor time symmetric. We will now see that as $\QCalc$ is free of this causality condition, that it is indeed time symmetric and moreover is time neutral. That is, we will show that it has a dagger as well as cups \& caps.

A suitable dagger for $\QCalc$ is given by the Hermitian adjoint, $\dagger_H$. There are other daggers, such as the transpose, however the Hermitian adjoint is special because it has the property that when applied to a state it defines an effect that tests for that state ~\cite{selby2017diagrammatic,reconstruction}. That is,
\beq
\dagger_H :: \rho \mapsto \mathsf{tr}(\rho\ \cdot\ ).
\eeq
Moreover, it inverts reversible dynamics: for any unitary supermap $\mathcal{U}$,
\beq
\dagger_H :: \mathcal{U} \mapsto \mathcal{U}^{-1}.
\eeq
These two conditions mean that the Hermitian adjoint is well suited to be interpreted as time-reversal for $\QCalc$.

We can now turn to cups \& caps. In the case of quantum systems, $\mathcal{H}$, these can be expressed as, 
\beq
\tikzfig{figures/cupquantum}\ \  \sim \ \ \sum_{ij} \ketbra{ii}{jj} \ \ \sim \ \ \tikzfig{figures/capquantum},
\eeq
for some basis $\ket{i} \in \mathcal{H}$. That is, the cup is a supernormalised version of the Bell state and the cap is a supernormalised Bell effect. For classical systems, $\mathds{A}$, we express the cup and cap as,
\beq
\tikzfig{figures/cupclassical}\ \  \sim \ \ \sum_{a\in \mathds{A}} \ketbra{a}{a} \ \ \sim \ \ \tikzfig{figures/capclassical}.
\eeq
That is, as the supernormalised perfectly correlated probability distribution and the supernormalised perfectly correlated response function respectively.

Note that the dagger given by the Hermitian adjoint and cups \& caps interact in the way that one would expect, namely,
\beq\tikzfig{figures/cupquantum} \ \underset{\dagger_H}{\stackrel{\dagger_H}{\leftrightarrows}} \ \  \tikzfig{figures/capquantum}.\eeq
This implies that we can indeed view the Hermitian adjoint as a reflection of the diagrams.

Many of the processes within this theory have a clear physical interpretation. Most obviously the trace-preserving processes, as these are exactly the processes in $\QPhys$, but, beyond this, we can also quite easily interpret the trace-non-increasing CP maps. These can be viewed as processes that occur in some branch of a causal process, that is, processes that only occur as one possibility amongst many.

 One may have hoped that one could simply add in trace-nonincreasing processes to $\QPhys$ to obtain a time symmetric theory. Ultimately, however, this is not enough and we must also include some trace-increasing CP maps too. For example, the dagger of the discarding effect is a supernormalised maximally mixed state, and, hence, is a trace-increasing CP map. 
If we were to try to give these trace-increasing processes a physical interpretation then we would run into problems. In particular, the theory cannot be given a sensible operational interpretation, for example, it permits `probabilities' that are greater than $1$. A simple example of this comes from composing the cups \& caps themselves. In particular, we find that:
\beq\label{eq:badprob}
\tikzfig{figures/CupCapIsDimension}\ \quad \text{and} \quad \ \tikzfig{figures/CupCapIsDimensionC}.
\eeq
 As scalars that are greater than one do not have any physical interpretation, $\QCalc$ cannot be a good description of nature. Nonetheless, as we mentioned earlier, it is extremely useful as a theory in which we perform calculations relevant to quantum physics. For example, if we want to compute the probability of some measurement outcome given a state, then we can simply compose the associated effect with the state:
\beq
\tikzfig{figures/compStateEffect}\quad = \quad \mathsf{tr}(\sigma \rho).
\eeq 
This necessarily results in a sensible probability if the state and effect are both trace non-increasing.

\subsection{Process theories with dual systems}\label{sec:duals}

In the previous subsections, we have assumed that systems are invariant under time reversal. While this is a natural assumption to make, it is also interesting to consider situations in which this is not the case. For example, in particle physics it is often taken to be the case that time-reversed particles are in fact antiparticles. Indeed, in section~\ref{sec:QPart} we define a new kind of time-symmetric  quantum theory that captures this idea.

To investigate such situations we introduce time directed arrows to our systems,
\beq
\tikzfig{figures/CqWire}\quad\text{,}\quad
\tikzfig{figures/RqWire}, 
\eeq
and say that these systems are \emph{dual} to one another. We symbolically denote these by $A^\uparrow$ and $A^\downarrow$ respectively. 

It is then clear that when we time reverse such a process theory, that is, by reading the diagrams from top to bottom rather than bottom to top, systems will get mapped to their duals. For example, 
\beq
\tikzfig{figures/processDaggerDual} \in \mathbf{Proc}_{R}\quad \iff\quad 
\tikzfig{figures/processDual} \in \mathbf{Proc}.
\eeq

This means that we must also suitably refine the notion of the dagger, and, hence, the condition for time symmetry for such theories. 

\begin{definition}[Daggers for process theories with duals]\label{def:daggerdual}
A dagger is a map, $\dagger$, from the process theory to itself that \emph{reflects diagrams}. In particular, it acts on processes as,
\beq\tikzfig{figures/processDual} \ \underset{\dagger}{\stackrel{\dagger}{\leftrightarrows}} \ \  \tikzfig{figures/processDaggerDual},\eeq
and, moreover, on diagrams as
\beq\tikzfig{figures/diagramDual} \ \ \ \underset{\dagger}{\stackrel{\dagger}{\leftrightarrows}} \ \ \ \  \tikzfig{figures/diagramDaggerDual}.\eeq
\end{definition}
We can then say that a process theory with duals is time-symmetric if and only if it has a dagger.

The definition of cups \& caps can also be suitably modified in this context to the bent wires
\beq
\tikzfig{figures/cupdual1}\ \ , \quad \ \tikzfig{figures/cupdual2}\ \ , \quad \ \tikzfig{figures/capdual1}\ \ , \quad \ \tikzfig{figures/capdual2}\ \ . 
\eeq
satisfying the obvious diagrammatic equations.
A process theory with duals is then said to be time-neutral if and only if it has such cups \& caps.

\subsection{Example 3: quantum symmetries}

An example of a category with duals is the category of representations of a group, $G$, within quantum theory, $\QRep$. It has representations of a group $G$ on objects in $\QCalc$ as objects, and certain special kind of processes within $\QCalc$, called intertwiners, as its morphisms.
An extensive discussion of this category in the context of categorical quantum dynamics can be found in \cite{gogioso2017thesis,gogioso2019dynamics}.
This process theory was previously considered in the context of resource theories of asymmetry as a way to define the free set of processes \cite[Example 3.9]{coecke2016mathematical}. The perspective we take here, however, is that this is a mixed state version of the ideas presented in \cite[Penrose (1971)]{baez2011prehistory}. These two perspectives are, however, equivalent to one another thanks to a covariant version of Stinespring's dilation theorem \cite{SCUTARU197979,verdon2021covariant}.

In section~\ref{sec:QPart} we motivate $\QRep$ physically by creating a toy model for particle physics, where a particle is identified with a representation of $G$ on a quantum system $\mathcal{B(H)}$. Generally speaking, a particle is defined in the literature as a particular irreducible representation of the Poincar\'e group, which has infinite dimensionality. In our case, however, we construct a toy model by assuming that we have only finite representations of some other group $G$.
However, it is known that the approach also generalises to (certain) infinite groups \cite{gogioso2019dynamics}.

This toy model has the potential to put particle physics under a new light since it provides a neat description for the interaction of particles with classical systems. For instance, it allows for a clear definition of measurements within particle physics. In addition, it creates a passage from the standard pure state to mixed state particle physics, for example, via the CPM construction \cite{SelingerCPM}. 

We now define the category $\QRep$ using a diagrammatic notation for groups and their representations that we describe in App.~\ref{app:Groups}.

\begin{definition}
Consider a group $G$. The category $\QRep$ consists of the following data:
\begin{itemize}
\item Objects are pairs, 
\beq
\left( Q ,\ \tikzfig{GRep} \right),
\eeq
 where  $Q$ is an object in $\QCalc$ and $\pi_Q$  is a causal representation (see Eq.~\eqref{eq:CausGRep}) of $G$ on $Q$. 
\item The monoidal product of 
$(Q,\pi_Q)$ and $(Q',\pi'_{Q'})$ is given by 
\beq
\left(Q\otimes Q',\tikzfig{GCompRep1}\right).\eeq
\item The monoidal unit is the pair $(\mathds{C}, \pi_{\mathds{C}})$, where $\pi_{\mathds{C}}$ is the trivial representation:
\beq
\tikzfig{GRepTrivial}\ \ =\ \ \tikzfig{GRepTrivial1}.
\eeq
\item Morphisms from $(Q,\pi_Q)$ to $(Q',\pi'_{Q'})$ are intertwiners in $\QCalc$, that is, they are CP maps, $\mathcal{E}$, satisfying the covariance condition:
\beq
\tikzfig{GIntertwiner} = \quad \tikzfig{GIntertwiner1}.
\eeq
\item Composition is the familiar composition of processes as in $\QCalc$ as it can be easily checked that indeed the composition of intertwiners is an intertwiner.
\item  The identity is the identity process which can also be seen to be an intertwiner.
\item Finally, if we denote the system $(Q,\pi_Q)$ as,
\beq
\tikzfig{GRepSysUp},
\eeq
then we can define a dual system,
\beq
\tikzfig{GRepSysDown},
\eeq
where $\pi^*$ is the conjugate representation defined in Eq.~\eqref{eq:congRep}.
\end{itemize}
\end{definition}

Note that one can show that cups \& caps are intertwiners (Eq.~\eqref{eq:capIntertwiner}) and hence, $\QRep$ is a time neutral process theory with duals. Like $\QCalc$, however, $\QRep$ cannot be directly interpreted as a theory of physics -- that is, it does not necessarily make sensible probabilistic predictions. We, therefore, need to find a physicality condition on the processes in $\QRep$ akin to the restriction of $\QCalc$ to $\QPhys$ via the causality condition. In section~\ref{sec:QPart}, we propose a way to implement this in a way that preserves the time symmetry of $\QRep$.

\section{Two approaches to time symmetry}\label{sec:3TS}

In the previous sections, we saw that causality in $\QPhys$ serves as the main obstacle towards time symmetry and time neutrality. In particular, time symmetry together with causality implies that there should be a single state per system, and time neutrality together with causality imply that there should be a single transformation between any pair of systems, neither of which is true within $\QPhys$. However, the time symmetric and time neutral theories that we have introduced, $\QCalc$ and $\QRep$, both have the problem that they do not have a coherent operational interpretation, as they have scalars that are greater than $1$.

Having identified the root of the time asymmetry within quantum theory, we can then ask what we can do to obtain a symmetric theory. There are several ways that we have identified in which one could approach this: 
\begin{enumerate}
\item[1.] We can restrict $\QPhys$ to a subtheory that additionally satisfies the retrocausality constraint -- that is, every system is both causal and retrocausal. This is related to the works of Refs.~\cite{hardy2021time} and \cite{di2020quantum}.
\item[2.] We extend the systems in $\QPhys$ to have time-symmetric counterparts for every system -- that is, every system is either causal or retrocausal. In particular, we consider this approach for our toy model of particle physics.
\item[3.] We can start with the supertheory of $\QPhys$, $\QCalc$, and then, to avoid unphysical predictions, we can:
\begin{enumerate}
\item Modify the composition rule. This is closely related to the works of Refs.~\cite{aharonov2008two,
aharonov2009multiple,
oeckl2008general,
oreshkov2015operational,
silva2017connecting,
silva2014pre,
oreshkov2016operational,
oeckl2016local}.
\item Modify the processes. This is equivalent to (a) but is a more elegant and adaptable presentation of the theory.
\end{enumerate}
\end{enumerate}
The first two options lead to a time-symmetric theory (but not a time-neutral theory) so we discuss them in this section, while the third leads to a time-neutral theory and thus we discuss it in the following section.

\subsection{Causal and retrocausal}

Given that causality is the main obstacle towards time symmetry within quantum theory, perhaps the most obvious way to symmetrise the theory is to restrict the processes to those that also satisfy a retrocausality condition. 
\begin{definition}[Bicausality]\label{def:bicausal} A process theory that is both causal (Def.~\ref{def:causal}) and retrocausal (Def.~\ref{def:retrocausal}) is said to be \emph{bicausal}\footnote{Known as double causality in \cite{hardy2021time}.}. This, in particular, means that it has a unique effect and a unique state for each system.
\end{definition}
In such theories, the simple argument against time symmetry of causal theories (namely, that there are more states than effects) breaks down, and so it is at least plausible that bicausal theories can be time-symmetric. It is therefore interesting to explore how to construct bicausal theories out of causal theories.

Given any causal process theory, we can construct a subtheory which additionally satisfies a retrocausality constraint. To do so, however, we must pick a particular state for each system which we will then demand be the unique state for the system. Typically there will not be a unique way to choose these states, but they nonetheless must satisfy certain consistency conditions for the resulting theory to be well defined. Let us denote these candidate states as:
\beq
\tikzfig{figures/maxMixedState}.
\eeq
These must be chosen such that they satisfy,
\beq
\tikzfig{figures/maxMixedStateAB} \ \ = \ \ \tikzfig{figures/maxMixedStateA}\tikzfig{figures/maxMixedStateB},\eeq
in order to ensure that the resulting theory is closed under composition. Note that as they belong to a causal theory they moreover automatically satisfy,
\beq
\tikzfig{figures/maxMixNorm} \ \ = \ \ \tikzfig{figures/empty}.
\eeq

We can then restrict the allowed processes within the theory to those that satisfy:
\beq\label{eq:compMaxMixed}
\tikzfig{figures/maxMixPres} \ \ = \ \ \tikzfig{figures/maxMixedStateB}.
\eeq
Having so restricted the theory we can then note that the remaining subtheory satisfies the retrocausality constraint where we take,
\beq
\tikzfig{figures/noiseA}\ \  := \ \ \tikzfig{figures/maxMixedStateA},
\eeq
for all systems $A$. 

For general causal process theories for a general choice of these states, there is no reason to believe that this will result in a time-symmetric theory. That is, even after ensuring that we only consider processes that are both causal and retrocausal it may still not be possible to define a dagger. In the case of quantum theory, we show shortly that for a suitable choice of states we do end up with a theory with a dagger. However, there are other generalised probabilistic theories (e.g., those that do not satisfy any notion of self-duality \cite{muller2012structure}) in which this is not possible, at least, not without imposing further constraints on the sets of processes.

Returning to the case of $\QPhys$, the natural choice to make for the unique state is $\mu_\mathcal{H} := \frac{1}{|\mathcal{H}|} \mathds{1}_{\mathcal{H}}$, that is, the maximally mixed state for the system:
\beq
\tikzfig{figures/quantNoiseState}\ \ := \ \ \frac{1}{|\mathcal{H}|}\tikzfig{figures/quantMaxMixed}.
\eeq
It is simple to verify that these indeed satisfy the compositionality condition of Eq.~\eqref{eq:compMaxMixed}. The constraint that is then imposed on the processes of $\QPhys$ in order to define the subtheory is,
\beq
\tikzfig{figures/unital} \ \ = \ \ \tikzfig{figures/unital1}.
\eeq
This means that $\mathcal{E}$ maps the maximally mixed state to the maximally mixed state, i.e., that $\mathcal{E}$ is a \emph{unital} CPTP map. In the special case that the inputs and outputs are classical we find that:
\beq
\tikzfig{figures/unital2} \ \ = \ \ \tikzfig{figures/unital3},
\eeq
or in other words that they are bistochastic maps\footnote{Bistochastic maps are typically taken to be square matrices but this constitutes the natural notion which applies also to the non-square case \cite{hardy2021time}.}. We will denote the subtheory of unital CPTP maps $\QUnital$.

One may then be tempted to define the dagger using the Hermitian adjoint, $\dagger_{H}$, as we did for $\QCalc$. Note, however, that the Hermitian adjoint maps the discarding map to the supernormalised state, $\mathds{1}_{\mathcal{H}}$, rather than the unique state that we have defined as, $\frac{1}{|\mathcal{H}|} \mathds{1}_{\mathcal{H}}$. To take care of these normalisation issues we must therefore define the dagger as,
\beq
\dagger\left(\tikzfig{figures/cqProcessf} \right) := \dagger_H\left(\tikzfig{figures/cqProcessf} \right)\frac{|\mathcal{K}||\mathds{A}|}{|\mathcal{H}||\mathds{X}|}.
\eeq
It is straightforward to show that this indeed defines a dagger, as, in particular it maps CP maps to CP maps and trace-preserving maps to unital maps and vice versa. In other words:
\begin{proposition}
$\QUnital$ is a time-symmetric process theory.
\end{proposition}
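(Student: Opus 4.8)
The plan is to verify directly that the rescaled Hermitian adjoint given in the statement satisfies Definition~\ref{def:dagger}. By the remark following that definition it is enough to check that $\dagger$ is an identity-on-objects, involutive, contravariant, monoidal endofunctor of $\QUnital$; concretely this breaks into three tasks: (i) well-definedness, i.e.\ that $\dagger$ sends a unital CPTP map again to a unital CPTP map, so that it really is an endomap of $\QUnital$; (ii) that it reflects diagrams, i.e.\ is contravariant for sequential composition, multiplicative for the monoidal product, and fixes every system; and (iii) that applying it twice is the identity. Throughout I would write $d_{\mathrm{in}}:=|\mathcal{H}||\mathds{X}|$ and $d_{\mathrm{out}}:=|\mathcal{K}||\mathds{A}|$ for the total input and output dimensions of $\mathcal{E}$, so that $\dagger(\mathcal{E})=\tfrac{d_{\mathrm{out}}}{d_{\mathrm{in}}}\,\dagger_H(\mathcal{E})$, and I would lean on the fact, already used for $\QCalc$, that the bare Hermitian adjoint $\dagger_H$ is a genuine dagger on the supertheory $\QCalc$.

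For task (i) I would invoke two standard properties of the Hilbert--Schmidt adjoint. First, $\dagger_H$ preserves complete positivity, and since $d_{\mathrm{out}}/d_{\mathrm{in}}>0$ the rescaling keeps $\dagger(\mathcal{E})$ completely positive. Second, the adjoint exchanges trace-preservation and unitality: dualising trace-preservation of $\mathcal{E}$ gives $\dagger_H(\mathcal{E})(\mathds{1})=\mathds{1}$, i.e.\ the bare adjoint fixes the identity operator, while dualising unitality of $\mathcal{E}$ gives $\mathsf{tr}\big(\dagger_H(\mathcal{E})(B)\big)=\tfrac{d_{\mathrm{in}}}{d_{\mathrm{out}}}\mathsf{tr}(B)$. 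Multiplying by $d_{\mathrm{out}}/d_{\mathrm{in}}$ turns the first into the maximally-mixed-preservation condition $\dagger(\mathcal{E})(\mathds{1})=\tfrac{d_{\mathrm{out}}}{d_{\mathrm{in}}}\mathds{1}$ and the second into exact trace-preservation $\mathsf{tr}(\dagger(\mathcal{E})(B))=\mathsf{tr}(B)$; thus the prefactor converts the identity-operator normalisation of the bare adjoint into the maximally-mixed-state normalisation demanded by $\QUnital$. As a sanity check, this sends the discarding effect on $\mathcal{H}$ (for which $d_{\mathrm{out}}=1$, $d_{\mathrm{in}}=|\mathcal{H}|$) to $\tfrac1{|\mathcal{H}|}\mathds{1}_{\mathcal{H}}=\mu_{\mathcal{H}}$, the chosen unique state, resolving the normalisation mismatch flagged just before the statement.

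For task (ii) I would use that $\dagger_H$ already reflects diagrams on $\QCalc$, so only the prefactor needs checking. Under sequential composition $\mathcal{F}\circ\mathcal{E}$ through a middle system of dimension $d_{\mathrm{mid}}$, the two prefactors $\tfrac{d_{\mathrm{mid}}}{d_{\mathrm{in}}}$ and $\tfrac{d_{\mathrm{out}}}{d_{\mathrm{mid}}}$ telescope to $\tfrac{d_{\mathrm{out}}}{d_{\mathrm{in}}}$, matching the prefactor of $\dagger(\mathcal{F}\circ\mathcal{E})$ and hence giving $\dagger(\mathcal{F}\circ\mathcal{E})=\dagger(\mathcal{E})\circ\dagger(\mathcal{F})$. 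Under the monoidal product the dimensions multiply, so the prefactors multiply as well and $\dagger(\mathcal{E}\otimes\mathcal{F})=\dagger(\mathcal{E})\otimes\dagger(\mathcal{F})$; since systems are left untouched, $\dagger$ is identity-on-objects. For task (iii), applying $\dagger$ twice swaps the roles of $d_{\mathrm{in}}$ and $d_{\mathrm{out}}$, so the prefactors $\tfrac{d_{\mathrm{out}}}{d_{\mathrm{in}}}$ and $\tfrac{d_{\mathrm{in}}}{d_{\mathrm{out}}}$ cancel, and since $\dagger_H$ is involutive we obtain $\dagger(\dagger(\mathcal{E}))=\mathcal{E}$.

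The only genuinely delicate point, and the thing I would flag as the main obstacle, is the bookkeeping of this dimension prefactor, which has to do double duty: it must simultaneously repair the normalisation so that unital CPTP maps are closed under $\dagger$, and be telescoping/multiplicative so that contravariant functoriality and monoidal compatibility survive. Everything else, namely preservation of complete positivity and the trace/unitality duality, is a routine consequence of standard properties of the Hilbert--Schmidt adjoint already exploited for $\QCalc$. A secondary check worth spelling out is that the argument goes through uniformly for the classical sectors, where unital CPTP maps reduce to the non-square bistochastic maps of the earlier footnote, since there the same $|\mathds{X}|,|\mathds{A}|$ dimension factors appear and behave identically under composition.
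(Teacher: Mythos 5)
Your proposal is correct and follows essentially the same route as the paper: the paper defines the same rescaled Hermitian adjoint $\dagger = \tfrac{|\mathcal{K}||\mathds{A}|}{|\mathcal{H}||\mathds{X}|}\,\dagger_H$ and asserts (without detail) that it is a dagger because it maps CP maps to CP maps and exchanges trace-preservation with unitality, which is precisely your task (i). Your tasks (ii) and (iii) — the telescoping and multiplicativity of the dimension prefactor under sequential and parallel composition, and involutivity — are exactly the ``straightforward'' verifications the paper leaves implicit, carried out correctly.
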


On the face of it, however,  $\QUnital$ does not seem to be a good candidate to describe our world: there is only a single state for every system. The theory is not entirely trivial, however, as it still has interesting transformations. For example, it still contains unitary evolution, and moreover, it makes non-trivial classical `predictions' in the form of the bistochastic matrices. It is not straightforward, however, to see how these bistochastic matrices suffice to explain our everyday experiences. For example, they do not allow for us to copy classical information which is, intuitively, an operation that we would expect to be able to do.

Despite these obstacles, this sort of approach is closely related to that which is advocated for in Refs.~\cite{hardy2021time} and \cite{di2020quantum}.  The idea is that whilst they may not describe our everyday experiences in the way that we might typically formulate them, they can still be recovered by suitably choosing which classical random variables we condition on. That is, the asymmetry arises as a choice of which direction we are going to choose to make inferences in, rather than being part of the underlying physics. 
It is therefore an interesting question to ask how the formalisms of Refs.~\cite{hardy2021time} and \cite{di2020quantum} (in particular, the idea of choosing which variables to condition on) can be captured within a process-theoretic context. We leave this as an important direction for future research.

\subsection{Causal or retrocausal} \label{sec:QPart}

In contrast to the previous section (in which we tried to impose both the causality and retrocausality condition for every system), we will now formulate a theory in which every system satisfies either the causality or the retrocausality condition to maintain time-symmetry.

To do so, we work with process theories with duals, where we view a system $A^\uparrow$ as a causal system and $A^\downarrow$ as its retrocausal counterpart. To enforce this interpretation we demand that systems $A^\uparrow$ have a unique effect and that systems $A^\downarrow$ have a unique state. Hence, in particular, processes from $A^\uparrow \to B^\uparrow$ necessarily satisfy the causality condition (Def.~\ref{def:causal}), whilst those from $A^\downarrow \to B^\downarrow$ will satisfy the retrocausality condition (Def.~\ref{def:retrocausal}).

A general process in such a theory, however, has both causal and retrocausal inputs and outputs. Diagrammatically this is denoted as:
\beq
\tikzfig{figures/generalCausalRetrocausalProcess}.
\eeq
The important question to answer is, what `causality' type condition should this process satisfy so that the (retro)causality conditions are satisfied for the systems $A^\uparrow$ and $A^\downarrow$ individually?

To get to grips with this condition we work within our example category $\QRep$. In this case, systems $A^\uparrow$ correspond to pairs $(Q, \pi_{Q})$ and systems  $A^\downarrow$ to dual pairs $(Q^*, \pi^*_{Q^*})$. This correspondence can be thought of as a toy model for particle physics: particles are defined as the (causal) pairs  $(Q, \pi_{Q})$ while antiparticles as (retrocausal) pairs $(Q^*, \pi^*_{Q^*})$. In a sense, this takes seriously  Feynmann's interpretation of antiparticles as being particles travelling back in time. The flexibility of our construction allows particles to interact with classical systems. Thus, it can be viewed as a way to include the concept of measurement in particle physics. Furthermore, it allows for particles to be in mixed states rather than pure as in the standard formulation. 

As mentioned above, if we only have causal inputs and outputs then the process should be causal:
\beq\label{eq:causalIsCausal}
\tikzfig{figures/qCausal} \ \ = \ \ \tikzfig{figures/qCausal1}.
\eeq
Similarly, the processes with only retrocausal inputs and outputs should be retrocausal processes:
\beq\label{eq:retroIsRetro}
\tikzfig{figures/qRetro} \ \ = \ \ \tikzfig{figures/qRetro1},
\eeq
where in this case we are taking the unique state of the retrocausal system to be the supernormalised maximally mixed state $\mathds{1}$.

We will now show, however, that imposing these conditions alone does not result in a theory that is closed under composition and show precisely what more is needed to obtain a consistent theory.

 To begin, consider what conditions \ref{eq:causalIsCausal} and \ref{eq:retroIsRetro} imply for effects and states with both causal and retrocausal systems. In particular these mean that
 \beq\label{eq:rcStateEffect}
\tikzfig{figures/qInteraction2}\quad\text{and}\quad\tikzfig{figures/qInteraction3}.
\eeq
 Examples of such processes are given by cups \& caps as they indeed satisfy:
\beq
\tikzfig{figures/qcapInteraction2}\quad\text{and}\quad\tikzfig{figures/qcupInteraction3}\quad\text{respectively.}
\eeq
This example, however, indicates the need for further constraints on states and effects over Eq.~\eqref{eq:rcStateEffect} as we can compose these cups \& caps to achieve violations of Eqs.~\eqref{eq:causalIsCausal} and \eqref{eq:retroIsRetro}. Specifically, as noted in Eq.~\eqref{eq:badprob}, if we compose a cup with a cap then we end up with a scalar other than the empty diagram:
\beq
\tikzfig{figures/qInteractionCompoition} \quad \neq \quad \tikzfig{figures/empty} \quad \neq \quad \tikzfig{figures/qInteractionCompositionClassical}.
\eeq
Composing either of these with a causal process, $f$, on a causal system will then give us a process that violates the causality condition, i.e.:
\beq
\tikzfig{figures/qInteractionCompositionClassical}\tikzfig{figures/qCausal} \ \ = \ \ \tikzfig{figures/qInteractionCompositionClassical}\tikzfig{figures/qCausal1}\ \ \neq \ \ \tikzfig{figures/qCausal1}.
\eeq

We have therefore seen that straightforward application of conditions \ref{eq:causalIsCausal} and \ref{eq:retroIsRetro} does not lead to a theory that is closed under composition. 
This example, however, does give us a hint as to what necessary and sufficient conditions must be imposed on general processes to ensure that Eqs.~\eqref{eq:causalIsCausal} and \eqref{eq:retroIsRetro} always hold. 

In particular, note that the classical cup in this context can be viewed as perfect signalling from a retrocausal to a causal system, whilst the classical cap can be viewed as perfect signalling from causal to a retrocausal system:
\beq
\tikzfig{figures/signal1}\quad \text{and} \quad \tikzfig{figures/signal2} \quad \forall x\in \mathds{X}.
\eeq
We can therefore conjecture that it is signalling between the causal and retrocausal systems which lead us to problems. This is perhaps not surprising as many well-known paradoxes arise from the closed time loops which we could construct via:
\beq
\tikzfig{figures/timeloop}.
\eeq

To prevent such loops from occurring we must therefore impose no-signalling conditions on our processes.  Formally what this means is that every process $F$ must satisfy both
\beq\label{eq:QuantNS}
\tikzfig{figures/qNoSignaling1}\quad\text{and}\quad\tikzfig{figures/qNoSignaling2},
\eeq
for some retrocausal process, $F_r$, and some causal process, $F_c$. We obtain conditions Eqs.~\eqref{eq:causalIsCausal} and \eqref{eq:retroIsRetro} as special cases of this by restricting to the case of only causal (or retrocausal) inputs and outputs. Moreover, these no-signalling conditions imply that a general process in $\QRep$ 
 must satisfy:
\beq\label{eq:dualcausal}
\tikzfig{figures/qCR},
\eeq
which rules out interactions that directly send information from a particle to an antiparticle as it implies that:
\beq
\tikzfig{figures/qInteraction1}.
\eeq
Note also that there is only a single scalar that satisfies this constraint
\beq\label{eq:rcScalar}
\tikzfig{figures/empty}
\eeq
which implies that the theory is deterministic.

Finally, we can be sure that imposing this constraint leads to a consistent theory as it is straightforward to see that this condition is closed under composition. For example, if $F$ and $G$ both no-signalling then so is their sequential composite:
\beq
\tikzfig{figures/noSignallingComposition}\ \ \text{and} \ \ \tikzfig{figures/noSignallingComposiion1}.
\eeq

We have therefore argued about the constraint that out toy model of particle physics (with particles viewed as finite representations of a group $G$ and their antiparticle counterpart given by the conjugate representation) needs to satisfy to be time-symmetric whilst making sensible operational predictions. This leads us to the following definition:

\begin{definition}\label{def:QPart}
We define the process theory $\QPart$ as the restriction of $\QRep$ to the subtheory of processes satisfying no-signalling from particles to antiparticles and vice versa (i.e., Eqs.~\eqref{eq:QuantNS}).
\end{definition}

Note that whilst Def.~\ref{def:QPart}  imposes that there is no-signalling between causal and retrocausal systems in $\QPart$, it does not mean that the interactions between these types of systems are necessarily trivial. In particular, it is well known in quantum and classical theory that there can be processes that do not permit signalling but that are not of product form, e.g., PR-boxes \cite{popescu1994quantum}. 

It is clear that the dagger for $\QRep$ is also a dagger for $\QPart$, hence:
\begin{proposition}$\QPart$ is a time-symmetric process theory.
\end{proposition}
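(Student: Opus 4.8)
The plan is to show that the dagger already available on $\QRep$ restricts to the subtheory $\QPart$. Recall from the preceding discussion that $\QRep$ is time-neutral with duals and therefore carries a dagger $\dagger$ which reflects diagrams while exchanging each system with its dual, $A^\uparrow \leftrightarrow A^\downarrow$. Since $\QPart$ has by Definition~\ref{def:QPart} exactly the same systems as $\QRep$ and is obtained merely by \emph{restricting} the processes, and since $\dagger$ is already an involutive, diagram-reflecting map on all of $\QRep$, the only thing I need to verify is that $\dagger$ sends processes of $\QPart$ back into $\QPart$; that is, that the no-signalling constraints of Eq.~\eqref{eq:QuantNS} which define $\QPart$ are preserved under the dagger.

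The key structural fact I would use is the way $\dagger$ interacts with the distinguished discarding effects and noise states. On a causal system $A^\uparrow$ the causality witness is the discarding effect, while on the dual retrocausal system $A^\downarrow$ the retrocausality witness is the supernormalised maximally mixed state $\mathds{1}$ fixed in Eq.~\eqref{eq:retroIsRetro}. Exactly as for $\QCalc$, the (Hermitian-adjoint) dagger maps the discarding effect to $\mathds{1}$ and back again, so $\dagger$ carries the causality witness on $A^\uparrow$ precisely to the retrocausality witness on $A^\downarrow$. Consequently $\dagger$ maps causal processes to retrocausal processes and vice versa.

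Given this, the argument is a short diagrammatic computation. I would take any $F$ in $\QPart$, so that $F$ satisfies both equations of Eq.~\eqref{eq:QuantNS} with witnesses $F_r$ (retrocausal) and $F_c$ (causal). Applying $\dagger$ reflects each equation and turns every discarding effect into the corresponding noise state, and conversely; hence the first equation for $F$ becomes the second equation for $F^\dagger$ with witness $F_r^\dagger$, and the second becomes the first with witness $F_c^\dagger$. Since $F_r^\dagger$ is causal and $F_c^\dagger$ is retrocausal by the previous paragraph, $F^\dagger$ again satisfies both conditions of Eq.~\eqref{eq:QuantNS} and so lies in $\QPart$. Thus $\dagger$ restricts to $\QPart$, and as it remains diagram-reflecting and involutive there, it is a dagger for $\QPart$, which proves the proposition.

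I expect the only real subtlety to be the bookkeeping in the third paragraph: checking that under reflection the two conditions of Eq.~\eqref{eq:QuantNS} genuinely map to one another rather than to some third condition. This hinges entirely on the compatibility of $\dagger$ with the duals $A^\uparrow \leftrightarrow A^\downarrow$ and with the cups \& caps that generate it, together with the discard-to-$\mathds{1}$ behaviour noted above. Everything else is immediate from the observation that the dagger is inherited wholesale from $\QRep$ and that $\QPart$ differs from $\QRep$ only by a restriction of its processes.
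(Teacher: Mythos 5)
Your proposal is correct and takes essentially the same approach as the paper, whose entire proof is the one-line assertion preceding the proposition that ``the dagger for $\QRep$ is also a dagger for $\QPart$.'' Your verification that the dagger swaps the two no-signalling conditions of Eq.~\eqref{eq:QuantNS} (via discard $\leftrightarrow$ $\mathds{1}$ and causal $\leftrightarrow$ retrocausal witnesses), so that $\QPart$ is closed under it, is exactly the detail the paper leaves implicit.
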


Returning now to general process theories, we take the no-signalling conditions (from causal to retrocausal and vice versa) as the definition of a well-behaved theory with causal and retrocausal systems: 

\begin{definition}[Dual-causal]
A process theory with dual systems is said to be dual-causal if it is no-signalling from causal to retrocausal and vice versa, namely if
\beq
\tikzfig{figures/NoSignaling1}\quad\text{and}\quad\tikzfig{figures/NoSignaling2}
\eeq
for all processes $F$.
\end{definition}

Given any causal process theory, there is a fairly boring way to construct such a time-symmetric theory by forbidding any interactions between the causal and retrocausal systems. In particular, we demand that:
\beq \label{eq:rcProuct}
\forall F \ \exists F_c \text{ and } F_r \text{ such that } \qquad \tikzfig{figures/NonInteracting}
\eeq
where $F_c$ is from the original causal process theory $\mathbf{Proc}$ and $F_r$ is from the associated retrocausal theory $\mathbf{Proc}_R$. We can then define a dagger by simply using the time reversal map $R$  as
\beq
\tikzfig{figures/rcDagger1} \ \ \underset{\dagger}{\stackrel{\dagger}{\leftrightarrows}} \ \  \tikzfig{figures/rcDagger},
\eeq
where now $F_c \in \mathbf{Proc}_R$ is a retrocausal process and $F_r\in\mathbf{Proc}$ is a causal process.

\begin{remark}Categorically what we are defining is simply $\mathbf{Proc}\times\mathbf{Proc}_R$ and noting that we can use the $R$ functor to define a dagger as $\mathbf{Proc}\times\mathbf{Proc}_R \cong \mathbf{Proc}_R\times\mathbf{Proc}$. \end{remark}

This however does not seem like a particularly useful or insightful theory. For it to be of interest we need non-trivial interactions between the causal and retrocausal systems. Hence an important direction for further research is to ask whether there is a generic construction that takes a causal process theory to a dual-causal process theory such that there are non-trivial interactions between causal and retrocausal systems.

\subsection{Discussion}

In this section, we have seen two approaches to constructing time-symmetric theories, with a focus on quantum theory as our key example. Our third approach, which is moreover time-neutral, will be discussed in the next section. All of these examples, however, show the utility of the process-theoretic approach to this problem. It provides an intuitive definition of time-symmetry (in terms of the existence of a dagger) and hence a clear target to aim for when developing time-symmetric theories. Moreover, it provides a flexible framework in which such theories can be developed and compared.

The most important directions for future work stemming from these approaches are:
\begin{itemize}
\item To explore the connections between our first approach, i.e., $\QUnital$, and Refs.~\cite{hardy2021time} and \cite{di2020quantum}. In particular to formalise these process-theoretically so that the connections are straightforward to identify.
\item To explore whether there are physical consequences to our second approach, i.e., $\QPart$, in particular, whether it can be generalized to include physically relevant infinite dimensional representations, such as those of the Poincar\'e group.
\end{itemize}

Another interesting question to ask is: suppose that $\QPart$ were a faithful representation of nature,  then how would we be able to distinguish it from $\QPhys$? Is there some experimental test that could shed light on which is the correct theory?

Suppose that we were an agent made up of only causal systems, then the no-signalling conditions would always prevent us from directly probing the retrocausal systems. That is, some mechanism akin to spontaneous symmetry breaking can be seen to occur -- we have a symmetric theory, but any instantiation of the theory will necessarily appear asymmetric!

Whilst direct evidence may therefore not be possible. There is, however, a possibility that the existence of these retrocausal counterparts could potentially be probed more indirectly. An analogy can be taken with spontaneous collapse theories -- the spontaneous collapse cannot be directly observed, but its presence could manifest as an additional decoherence term which has no alternative explanation  \cite{goldwater2016testing}. Similarly, if retrocausal systems were interacting with us, then their presence could be detected by otherwise inexplicable decoherence in experiments. 

A more direct answer could come from the study of higher-order processes. The framework of higher-order processes can be viewed as another attempt to modify the notion of causality and causal structure, in particular, with the motivation of studying (quantumly) indefinite causal structure. The main tool used in this study is the notion of a higher-order process, of particular interest being the higher-order processes known as \emph{process matrices} \cite{oreshkov2012quantum}. These are defined as maps, $W$, from pairs of channels to channels: 
\beq
\tikzfig{figures/PM} :: \left(\tikzfig{figures/PMNew2},\tikzfig{figures/PMNew3}\right) \mapsto \tikzfig{figures/PMNew1}. 
\eeq
We know that quantum circuits can only realise a subset of logically possible process matrices, that is, those that can be written as:
\beq
\tikzfig{figures/PM}\quad = \quad \tikzfig{figures/PM1}.
\eeq
However, now suppose that we also have retrocausal systems in the theory, this would give access to circuit realisable process matrices of the form:
\beq
\tikzfig{figures/PM}\quad = \quad \tikzfig{figures/PM2}.
\eeq
This leads us to our first open problem:
\begin{problem}
Do there exist process matrices that are circuit realisable in $\QPart$ but not in $\QPhys$?
\end{problem}

\section{Two equivalent approaches to time neutrality}\label{sec:2TN}

The above approaches may lead to interesting time-symmetric theories. However, they do not address time neutrality. The only time-neutral theory that we have encountered so far is the process theory $\QCalc$. Nevertheless, we noted that this is not a good physical theory as it does not always make valid probabilistic predictions. This section, therefore, asks whether it is possible to adapt $\QCalc$ such that it remains time-neutral and at the same time makes sensible probabilistic predictions.

One candidate approach would be to restrict to trace non-increasing maps, as this has more processes than $\QPhys$ (which assumed trace preservation) but fewer than $\QCalc$ (which does not impose a trace condition at all). Whilst this may seem like a promising avenue it unfortunately quickly falls down, as for any (non-trivial) system the set of subnormalised states is not isomorphic to the set of effects in the theory. See Fig.~\ref{fig:StateEffectBit} for a simple example in the case of a classical bit.
\begin{figure}
     \centering
     \begin{subfigure}[t]{0.3\textwidth}
         \centering
         \ctikzfig{figures/bitStateEffectNormalised}
         \caption{The trace-preserving states (red line segment) and effects (blue point) for a bit. There is a clear asymmetry here as we can see that there is a unique effect, corresponding to marginalisation, whilst there are many states, corresponding to probability distributions over a two element set.}
     \end{subfigure}
     \hfill
     \begin{subfigure}[t]{0.3\textwidth}
         \centering
         \ctikzfig{figures/bitStateEffectSpace}
         \caption{The trace-nonincreasing states (red triangle) and effects (blue square) for a bit. We can again see that there is an asymmetry, this time with more effects than states.}
     \end{subfigure}
     \hfill
     \begin{subfigure}[t]{0.3\textwidth}
         \centering
         \ctikzfig{figures/bitStateEffectCone}
         \caption{The states and effects without any trace condition imposed, these both form the same convex cone (grey octant) and so there is no obstacle to time symmetry here. However, we can now obtain `probabilities' greater than $1$ by composing states and effects and so this theory does not make sensible predictions.}
     \end{subfigure}
        \caption{The states and effects for a classical bit.}
        \label{fig:StateEffectBit}
\end{figure}

This has led to various other approaches in the literature aimed (directly or indirectly) at formulating a time-symmetric formulation of quantum theory. For example, see Refs.~\cite{aharonov2008two,
aharonov2009multiple,
oeckl2008general,
oreshkov2015operational,
silva2017connecting,
silva2014pre,oreshkov2016operational,oeckl2016local}. Whilst there are many philosophical differences between them, all of these are -- up to choices of convention -- captured by the formalism we present below. 

These approaches are typically presented as a modification of the measurement postulate of quantum theory.  They begin by extending the set of measurements to allow for measurements, $M$, in which the POVM elements do not necessarily sum to the identity:
\beq
M = \{ \{M_a\}_{a\in\mathds{A}} | M_a \geq 0  \},
\eeq
then, they ensure that the probabilistic predictions are sensible via a suitable modification of the Born rule. That is, they behave  such that if the ``probability distribution'' over measurement outcomes predicted by the standard quantum formalism is not normalised, i.e., if
 \beq
 N_M(\rho) := \sum_{a\in \mathds{A}} \mathsf{tr}( M_a \rho) \neq 1,
 \eeq
  then this distribution is renormalised by dividing by $N_M(\rho)$. This gives a new rule for computing probabilities:
  \beq
  \mathrm{Prob}(a|\rho) := \frac{1}{N_M(\rho)}\mathsf{tr}(M_a \rho)
  \eeq
  which differs from the standard Born rule whenever $N_M(\rho)\neq 1$.
This new probability rule ensures that the theory makes valid probabilistic predictions. (At least, aside from the case of $N_M(\rho)=0$ which is treated as a special case which we return to later.)

Simply modifying the probability rule is not, however, particularly satisfying from a process-theoretic perspective, since it is not manifestly compositional. What we show in this section is that we can recover this modified Born rule in a manifestly compositional way. We do so via the construction of two (equivalent) new process theories which naturally capture this modified Born rule in the appropriate situations.

The first process theory that we construct to achieve this can be defined by starting from $\QCalc$ and modifying the composition rule such that this composition rule reduces to the modified Born rule when a state is composed with a measurement. The second process theory that we construct achieves this in a simpler, more elegant, and more adaptable way by defining an appropriate quotient of $\QCalc$. We then will show how this process theoretic viewpoint naturally leads to a new way to deal with probability-zero events, leading to a deterministic time-neutral theory. Finally, we finish this section by investigating the relationship between this theory and higher-order quantum processes.

\subsection{Modified composition rule}

Within $\QCalc$ the modified Born rule can be presented as:
\beq
{\color{gray} \text{Standard rule}}\qquad\tikzfig{figures/ModifiedBorn1}\quad   \mapsto \quad \left( \tikzfig{figures/ModifiedBorn2}\right)^{-1} \tikzfig{figures/ModifiedBorn1} \qquad {\color{gray} \text{Modified rule}}
\eeq
for some CP map $M$ from a quantum to a classical system and some quantum state $\rho$. Provided that this normalisation factor is non-zero, this defines a valid probability distribution over $\mathds{A}$. 

As mentioned above, we want to define a process theory which the same processes and systems as $\QCalc$ but in which composition is redefined. In particular, it needs to be such that when we compose a state $\rho$ with a measurement $M$ the renormalisation ``automatically'' occurs. More formally, we want to define a new composition rule $\bullet$ such that
\beq
M\bullet \rho := \frac{1}{\mathsf{tr}(M\circ \rho)} M\circ \rho .
\eeq
where $\circ$ is the composition rule in $\QCalc$.
We will therefore denote the new process theory by $\QCalc^\bullet$.

To obtain a consistent process theory, however, we cannot simply redefine composition for these special cases. Instead, we must redefine composition in general, and obtain the above as a particular instance of the new rule. Explicitly, we define $\bullet$ via:
\beq\label{eq:bulletComp}
\tikzfig{figures/ModifiedBorn4} \bullet \tikzfig{figures/ModifiedBorn3} := \left\{ \begin{array}{cc} \left( \tikzfig{figures/ModifiedBorn6}\right)^{-1} \tikzfig{figures/ModifiedBorn5}  & \text{if} \quad\tikzfig{figures/ModifiedBorn6} \neq 0 \\ \ & \ \\
\mathbf{0} & \text{otherwise,} \end{array} \right.
\eeq
where now we are explicitly dealing with the probability-zero case.
This is the same rule introduced by \cite{oreshkov2015operational,oreshkov2016operational} and used in \cite{pinzani2019timetravel} in the context of the P-CTC model for closed timelike curves.
Then, as an instance of this rule we have that:
\beq
\tikzfig{figures/ModifiedBorn8} \bullet \tikzfig{figures/ModifiedBorn9} := \left\{ \begin{array}{cc} \left( \tikzfig{figures/ModifiedBorn2}\right)^{-1} \tikzfig{figures/ModifiedBorn1}  & \text{if}\quad\tikzfig{figures/ModifiedBorn2} \neq 0 \\ \ & \ \\
\mathbf{0} & \text{otherwise,} \end{array} \right.
\eeq
which reproduces the probabilities that we expect from the modified Born rule of these time-neutral theories.

For this to define a valid process theory, various conditions must be satisfied. In particular, one can show by direct computation, that $\bullet$ is associative and that it interacts suitably with parallel composition.
The more interesting case, however, comes from considering the identity processes. In any process theory it must be the case that $\mathds{1}_B \circ f = f = f\circ \mathds{1}_A$ for every process $f:A\to B$. In our case, however, if we try to impose this condition with $\bullet$ we find that:
\beq
\tikzfig{figures/ModifiedBorn3}\quad =\quad \tikzfig{figures/ModifiedBorn10} \bullet \tikzfig{figures/ModifiedBorn3}\quad =\quad \left\{ \begin{array}{cc} \left( \tikzfig{figures/ModifiedBorn11}\right)^{-1} \tikzfig{figures/ModifiedBorn3}  & \text{if} \quad\tikzfig{figures/ModifiedBorn11} \neq 0 \\ \ & \ \\
\mathbf{0} & \text{otherwise,} \end{array} \right.
\eeq
which only holds in the special cases that
\beq\label{eq:repElement}
\tikzfig{figures/ModifiedBorn11} = 1 \qquad  \text{or} \qquad \tikzfig{figures/ModifiedBorn3} = \mathbf{0}.
\eeq
Therefore, to define the process theory $\QCalc^{\bullet}$, we must both modify the composition rule \emph{and} restrict the set of allowed processes to the above special cases. For example, for a classical bit, we obtain the states and effects as shown in Fig.~\ref{fig:bulletStateEffect}.
\begin{figure}
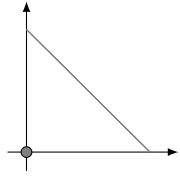

\[\tikzfig{figures/bulletStateEffect} \]
\caption{The states and effects for a classical bit in $\QCalc^\bullet$, these are the same and so does not serve as an obstruction to time symmetry in this theory. They include the origin plus the space of probability distributions, note that this means that the state and effect spaces are non-convex.}
\label{fig:bulletStateEffect}
\end{figure}
\begin{definition}[$\QCalc^\bullet$] This is the process theory which has the same objects as $\QCalc$, and, as processes, the subset of processes in $\QCalc$ satisfying Eq.~\eqref{eq:repElement}. Sequential composition is defined as in Eq.~\eqref{eq:bulletComp} whilst parallel composition is as in $\QCalc$.
\end{definition}

Having modified $\QCalc$ in such a way as to obtain $\QCalc^{\bullet}$, it is no longer immediately clear that time neutrality or even time symmetry of $\QCalc$ have been preserved. It turns out that this is the case, but we will return to this in the following section once we have the more elegant characterisation of this process theory. 

\subsection{Modified processes}

The construction of $\QCalc^\bullet$ within the previous section via a modified composition rule (and restriction of the allowed processes) can be more elegantly captured by a particular quotienting of $\QCalc$. That is, defining a new theory in which the processes correspond to particular equivalence classes of processes in $\QCalc$.

Specifically, we define two processes to be equivalent if they are equal up to non-zero scalar:
\beq\label{eq:equivClass}
\tikzfig{figures/cqProcessf}\sim \tikzfig{figures/cqProcessg} \quad \iff \quad \exists r>0 \text{ s.t. } \tikzfig{figures/cqProcessf} = r\ \tikzfig{figures/cqProcessg}.
\eeq
We denote the equivalence class for a process $\mathcal{E}$ as $\tilde{\mathcal{E}}$. It is then straightforward to note that we can construct the quotient process theory $\CPMTN$. This has the same systems as $\QCalc$ but has processes that correspond to equivalence classes of processes in $\QCalc$ under the above equivalence relation. Composition within $\CPMTN$ can then be defined, for instance, as
\beq\label{eq:compEquiv}
\tikzfig{figures/composingEquivalenceClasses2}\quad :=\quad \widetilde{\tikzfig{figures/composingEquivalenceClasses}}
\eeq 
where $\mathcal{E}$ is an arbitrary element of $\tilde{\mathcal{E}}$ and $\mathcal{F}$ is an arbitrary element of $\tilde{\mathcal{F}}$. It is straightforward to verify that this is well defined as this definition is independent of the choice of these elements.

Note that the scalars of $\CPMTN$ are severely restricted: the scalars in $\QCalc$ are $\mathbb{R}^+$ whilst in $\CPMTN$ they are equivalent to $\mathbb{Z}_2$ since we have two equivalence classes $\{\tilde{0},\tilde{1}\}$, where $\tilde{0}=\{0\}$ and $\tilde{1}=(0,\infty)$. This may at first glance seem problematic as we still want our time-symmetric theory to make probabilistic predictions, which within $\QCalc$ are encoded in the scalars. The resolution to this is (as we saw with $\QPhys$ which is a causal theory and hence has only a single scalar) that the probabilistic predictions of a theory are encoded into classical states
\beq\tikzfig{figures/probabilityDistribution}\eeq
rather than into individual scalars. We can therefore ask what predictions can be made by the theory $\CPMTN$, i.e., what are the processes of the form:
\beq\tikzfig{figures/probabilityDistributionEquiv}\ \ ?\eeq
It is not hard to see that these will be in one-to-one correspondence to probability distributions with one extra classical state left over, namely, the zero-state:
\beq\tikzfig{figures/zeroDist}\ .\eeq
See Fig.~\ref{fig:quotientStates} for the example where $\mathds{X}$ is a two element set.
This theory, therefore, makes predictions that can be interpreted probabilistically most of the time: We simply view the equivalence class containing a probability distribution $p$ as describing the same prediction as to the probability distribution itself.

\begin{figure}
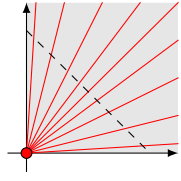

\[\tikzfig{figures/TNStateEffect}\]
\caption{A schematic depiction of the states in $\CPMTN$ for a classical bit. The red dot at the origin represents the zero states whilst the red rays going out (which do not include the origin) are examples of equivalence classes. The union of these equivalence classes fills the grey shaded cone. The dashed line indicates normalised probability distributions. Note that each of the rays intersects the dashed line at a single unique point, hence can be unambiguously identified with this probability distribution.}
\label{fig:quotientStates}
\end{figure}

The exception to this interpretation is the zero state for which there is no obvious interpretation as a probability distribution. If we want the process theory to represent physical processes, then this serves as a challenge as it is difficult to give a physical interpretation of a zero-process, which, by definition, is something that doesn't occur. In other words, the existence of the zero state means that the process theory is not deterministic. We return to this shortly. Before that, however, we will show that this theory is time-symmetric, and is time-neutral:

\begin{proposition}\label{prop:TimeSym} The Hermitian adjoint $\dagger_H$ is a dagger for $\CPMTN$.
\end{proposition}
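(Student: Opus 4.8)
The plan is to show that the Hermitian adjoint, which we already know furnishes a dagger for $\QCalc$, descends to a well-defined map on the quotient $\CPMTN$ and that this induced map inherits the dagger axioms. Since the objects of $\CPMTN$ coincide with those of $\QCalc$ and $\dagger_H$ acts as the identity on objects, the work reduces to two points: (i)~checking that $\dagger_H$ respects the equivalence relation of Eq.~\eqref{eq:equivClass}, so that setting $\dagger_H(\tilde{\mathcal{E}}) := \widetilde{\dagger_H(\mathcal{E})}$ is independent of the chosen representative; and (ii)~verifying that the diagram-reflection conditions of Def.~\ref{def:dagger} survive the quotient.

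For (i), suppose $\mathcal{E}\sim\mathcal{F}$, i.e.\ $\mathcal{E}=r\,\mathcal{F}$ for some $r>0$. The key observation is that the Hermitian adjoint is conjugate-linear in scalars, sending a scalar factor $\lambda$ to $\bar\lambda$; since a positive real $r$ satisfies $\bar r = r$, we have $\dagger_H(r\,\mathcal{F})=r\,\dagger_H(\mathcal{F})$. Hence $\dagger_H(\mathcal{E})=r\,\dagger_H(\mathcal{F})$ with the \emph{same} positive scalar $r$, giving $\dagger_H(\mathcal{E})\sim\dagger_H(\mathcal{F})$. The induced map is thus independent of representative. The zero class needs no special treatment, since $\dagger_H(\mathbf{0})=\mathbf{0}$ and $\tilde{0}=\{\mathbf{0}\}$ is a class on its own.

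For (ii), the diagrammatic axioms transfer straightforwardly. Composition in $\CPMTN$ is computed on arbitrary representatives through Eq.~\eqref{eq:compEquiv}, and $\dagger_H$ already reflects every diagram in $\QCalc$; reflecting a diagram of equivalence classes therefore agrees with reflecting a diagram of representatives and then passing to classes, and involutivity of the induced map follows directly from $\dagger_H\circ\dagger_H=\mathrm{id}$ in $\QCalc$. Together these give that the induced map is a dagger for $\CPMTN$.

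The only step carrying genuine content is (i), and it works \emph{precisely} because the quotient is taken by \emph{positive real} scalars, which are exactly the scalars left invariant by the Hermitian adjoint. This compatibility is the main point to get right; everything else is a routine consequence of the corresponding facts already established for $\QCalc$. (Note that quotienting instead by arbitrary nonzero complex scalars would also close the argument, since $r$ and $\bar r$ lie in the same class, but the positive-scalar choice is what aligns $\dagger_H$ with the probabilistic reading of the classical states discussed above.)
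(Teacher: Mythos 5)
Your proof is correct and takes essentially the same route as the paper: the paper's entire argument is the observation that $\dagger_H(r\,\mathcal{F}) = r\,\dagger_H(\mathcal{F})$ for all processes $\mathcal{F}$ and scalars $r$, so the Hermitian adjoint preserves equivalence classes and descends to the quotient. Your additional elaborations (well-definedness on representatives, descent of the diagram-reflection axioms, the zero class) are correct details that the paper leaves implicit.
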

\proof It is simple to see that $\dagger_H$ preserves equivalence classes as $\dagger_H(r \mathcal{F})=r\dagger_H(\mathcal{F})$ for all processes $\mathcal{F}$ and scalars $r$.
\endproof

\begin{proposition}\label{prop:TimeNeut} $\CPMTN$ is time-neutral.
\end{proposition}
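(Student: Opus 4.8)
The plan is to observe that $\CPMTN$ inherits its cups \& caps directly from $\QCalc$, which we already established to be time-neutral in the discussion of Example~2. Recall that time-neutrality amounts to the existence of cups \& caps satisfying the snake and symmetry equations~\eqref{eq:snake}; since $\QCalc$ possesses such processes (the supernormalised Bell states and effects on quantum systems, and the supernormalised perfectly correlated distributions and response functions on classical systems), it suffices to show that these descend along the quotient.

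First I would record that the quotient map $q\colon \QCalc \to \CPMTN$, $\mathcal{E}\mapsto \tilde{\mathcal{E}}$, is a strict symmetric monoidal functor which is the identity on systems. This is immediate from the construction: $\CPMTN$ has the same systems as $\QCalc$, and both sequential composition~\eqref{eq:compEquiv} and parallel composition are defined on arbitrary representatives, so $q$ preserves composition, tensor, identities, and swaps on the nose. I would then define the cup and cap of $\CPMTN$ to be the images $\widetilde{\bigcup}$ and $\widetilde{\bigcap}$ of the cup and cap of $\QCalc$.

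The key step is that applying a composition-preserving functor to an equation yields an equation: since~\eqref{eq:snake} holds in $\QCalc$, applying $q$ term by term gives the corresponding snake and symmetry equations for $\widetilde{\bigcup}$ and $\widetilde{\bigcap}$ in $\CPMTN$. Hence $\CPMTN$ has cups \& caps, and is therefore time-neutral.

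The one subtlety worth flagging — which I expect to be the only real content — is the interaction between the quotient and any scalars that appear. If some defining identity for the cups \& caps held in $\QCalc$ only up to a nonzero positive scalar rather than exactly, the equivalence relation~\eqref{eq:equivClass} identifies all such scalars with $\tilde{1}$, so the identity becomes exact in $\CPMTN$; this is in fact the mechanism by which the quotient repairs the pathological scalars of $\QCalc$ highlighted in~\eqref{eq:badprob}. The only thing that must genuinely be checked is that the chosen representatives of $\widetilde{\bigcup}$ and $\widetilde{\bigcap}$ are nonzero, so that they are not collapsed to the zero class $\tilde{0}$, which is clear since the supernormalised Bell state and effect are nonzero.
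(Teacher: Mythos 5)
Your proof is correct and takes essentially the same approach as the paper: the paper's own proof likewise defines the cups \& caps of $\CPMTN$ as the equivalence classes of those of $\QCalc$ and observes that Eqs.~\eqref{eq:snake} descend immediately from the definition of composition of equivalence classes in Eq.~\eqref{eq:compEquiv}. Your explicit framing of the quotient map as a strict monoidal functor, and the check that the cup and cap are not collapsed to $\tilde{0}$, merely spell out what the paper leaves implicit.
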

\proof
The equivalence class containing the cup and the equivalence class containing the cap will define cups \& caps for $\CPMTN$. These satisfy Eqs.~\eqref{eq:snake}, which immediately follows from the definition of composition of equivalence classes in Eq.~\eqref{eq:compEquiv}.
\endproof

\subsection{Equivalence of these approaches}

We now compare the process theory $\CPMTN$ (that we have defined via quotienting) to the process theory $\QCalc^{\bullet}$ (that we defined in the previous subsection via a modification of the composition rule). In particular, we show that these are equivalent process theories.

To begin, note that any process in $\QCalc$ is in the same equivalence class as a process satisfying one of the conditions in Eq.\eqref{eq:repElement}, and, moreover, that each equivalence class will contain a unique such element.
The crux of this is the following property of processes in $\QCalc$,
\beq
\tikzfig{figures/ModifiedBorn11} \ \ = \ 0 \quad \iff \quad \tikzfig{figures/cqProcessf} \ \ = \ 0,
\eeq 
 which follows from the fact that $\QCalc$ is locally tomographic, and that the trace and maximally mixed states are internal to the cone of effects and states respectively:
\begin{align}
\tikzfig{figures/ModifiedBorn11} \ \ = \ 0 \quad &\implies \quad \tikzfig{figures/MBProof1} \ \ + \cdots + \ \ \tikzfig{figures/MBProof2}\ \ = \ 0\ \ \forall \sigma, r, \rho, p \\
&\implies \quad \tikzfig{figures/MBProof1}\ \  = \ 0\ \  \forall \sigma, r, \rho, p \\
&\implies \quad \tikzfig{figures/cqProcessf} \ \ = \ 0,
\end{align}
where in the first step we use that the maximally mixed state and the discarding effect are internal, in the second step we use that if a sum of non-negative terms is zero then each term is zero, and in the final step we use that $\QCalc$ is locally tomographic.
  
The processes in $\QCalc^{\bullet}$ can therefore be viewed as a particular conventional choice of representative elements for the equivalence classes. The modified composition rule can then be derived by: i) replacing each process with its equivalence class, ii) composing the equivalence classes, iii) picking the representative element for the equivalence class of the composite. 

What is therefore clear from this is that:
\beq
\CPMTN\  \cong\  \QCalc^{\bullet},
\eeq
as $\QCalc^\bullet$ is simply a way to describe $\CPMTN$ using representative elements of the equivalence classes, rather than just working with the equivalence classes themselves.
 Moreover, the inelegant nature of $\QCalc^{\bullet}$ can be viewed as a consequence of the somewhat arbitrary nature in which the representative elements are chosen. Picking some other convention for how to pick a representative element would lead to a distinct composition rule (and hence a different probability rule) but would ultimately be describing the same process theory.
 
 This equivalence, together with Props.~\ref{prop:TimeSym} and \ref{prop:TimeNeut}, immediately tells us that $\QCalc^\bullet$ is also time symmetric and time neutral as we claimed earlier.

Note that, in practice, if we want to compute anything within $\CPMTN$ it will be much more convenient to work with representative elements rather than the equivalence classes themselves. In this case, one may be tempted to work within $\QCalc^\bullet$, however, there is no real benefit to choosing these particular representative elements. What is usually simplest is i) to pick \emph{arbitrary} representatives from each equivalence class, ii) compose them within $\QCalc$, and then iii) quotient at the end to obtain the equivalence class again. 

\subsection{Restoring determinism} \label{sec:determinism}

As mentioned earlier, in $\CPMTN$ (or equivalently $\QCalc^{\bullet}$) we have a problem with determinism  -- we have zero-processes that describe things that cannot occur. In other approaches this has been dealt with in a fairly arbitrary way by simply stating that when $N_M(\rho) = 0$ then $\mathrm{Prob}(a|\rho)=0$ for all $a \in \mathds{A}$. However, how are we to operationally understand a measurement in which all of the possible outcomes occur with probability zero? 

This manifests in our approach by the fact that we have a pair of scalars in $\CPMTN$ rather than the single scalar that we would expect in a deterministic theory (Def.~\ref{def:determinism}). Furthermore, our constructions have a somewhat arbitrary nature to them: i) the modified composition rule has to have two cases depending on whether or not a zero appears, and ii) when we quotient we do so concerning non-zero scalars rather than arbitrary scalars.

Here we present an alternative solution that will result in a deterministic theory where the zero-cases naturally do not arise. We do so by moving away from $\QCalc$ as the starting point for our constructions.

To begin, we observe that in any real-world experiment, we never manage to completely suppress all sources of noise. That is, in the lab we never actually prepare a pure state or perform a projective measurement (at least not on the system of interest). More generally, we can view that the set of experimentally realisable quantum processes are those of the form:
\beq\label{eq:noisyprocess}
\tikzfig{figures/noisyProcess1} \quad := \quad \tikzfig{figures/noisyProcess} \qquad \epsilon > 0
\eeq
that is processes that have some non-zero epsilon of noise. Indeed, we can define a restriction of $\QCalc$ denoted $\QCalc|_{noise}$ which has only processes of this form. It is straightforward to verify that this indeed defines a valid process theory as these are closed under composition.

 $\QCalc|_{noise}$ and $\QCalc$, at least from an operational point of view, are equivalent to one another. There is no real-world experiment that could be performed which could distinguish between some process $\mathcal{E}$ and the noisy version $\mathcal{E}_\epsilon$ provided that $\epsilon$ is small enough -- we can arbitrarily well approximate $\mathcal{E}$ using processes  $\mathcal{E}_\epsilon$ and taking the limit of $\epsilon \to 0$. In some sense then, choosing between working with $\QCalc$ rather than $\QCalc|_{noise}$ (as is standard in the literature) is purely a matter of convenience.
However, when we move over to the respective quotiented theories, $\CPMTN$ and $\QCalc|_{noise}/\!\!\sim$, we obtain strikingly different theories. This is highlighted by considering the scalars of the theory. Note that the scalars in $\QCalc$ are $[0,\infty)$ whilst in $\QCalc|_{noise}$ they are $(0,\infty)$. As noted before, this means that in $\CPMTN$ that we have two scalars $\{\tilde{0},\tilde{1}\}$, but in contrast for $\QCalc|_{noise}/\!\!\sim$ we have only the scalar $\tilde{1}$ (as $0$ is not a scalar in $\QCalc|_{noise}$). This means that $\QCalc|_{noise}/\!\!\sim$ is a deterministic process theory, and so every process has a valid operational interpretation. To see this more clearly, note that the states for a classical system are now only those of the form:
\beq\tikzfig{figures/probabilityDistributionEquiv}\eeq
as the zero state,
\beq\tikzfig{figures/zeroDist},\eeq
is no longer part of the theory. This means that \emph{every} classical state can be thought of as a probability distribution. For the case of a classical bit this is illustrated in Fig.~\ref{fig:noisyQuotient}.
\begin{figure}
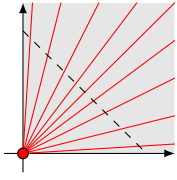
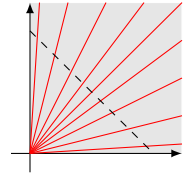

	\centering
     \begin{subfigure}[b]{0.45\textwidth}
         \centering
         \ctikzfig{figures/TNStateEffect} 
         \caption{The space of states for $\CPMTN$}
     \end{subfigure}
     \hfill
     \begin{subfigure}[b]{0.45\textwidth}
         \centering
         \ctikzfig{figures/TNStateEffectNoise}
         \caption{The space of states for $\QCalc|_{noise}/\!\!\sim$}
     \end{subfigure}
      \caption{Note that the only difference is whether or not the zero state appears. Every state (i.e., ray) on the right hand side can be uniquely identified with a probability distribution, that is, where it intersects the dashed line.}
        \label{fig:noisyQuotient}
\end{figure}

This theory remains time-symmetric as it obtains a dagger from the Hermitian adjoint of $\QCalc$:
\begin{proposition} $\QCalc|_{noise}/\!\!\sim$ is time-symmetric.
\end{proposition}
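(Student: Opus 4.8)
The plan is to follow exactly the template of Proposition~\ref{prop:TimeSym}, where the Hermitian adjoint $\dagger_H$ was shown to be a dagger for $\CPMTN$. Since $\QCalc|_{noise}/\!\!\sim$ is obtained by applying the same quotient \eqref{eq:equivClass} to the subtheory $\QCalc|_{noise}$ rather than to all of $\QCalc$, the argument that $\dagger_H$ preserves equivalence classes (namely $\dagger_H(r\,\mathcal{E}) = r\,\dagger_H(\mathcal{E})$ for any scalar $r>0$) carries over verbatim. The one genuinely new ingredient is to check that $\dagger_H$ restricts to a well-defined map on $\QCalc|_{noise}$ itself, i.e.\ that the subtheory of noisy processes is closed under the Hermitian adjoint \emph{before} quotienting.

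First I would recall that $\dagger_H$ is already a dagger for the ambient theory $\QCalc$, so it sends CP maps to CP maps and is involutive. Hence the only thing left to verify is that a process of the noisy form \eqref{eq:noisyprocess}, i.e.\ $\mathcal{E}_\epsilon = \mathcal{E} + \epsilon\,(\text{noise})$ with $\epsilon>0$, is mapped by $\dagger_H$ to another process of the same form. The key step is to track how $\dagger_H$ acts on the noise term, the completely depolarising contribution given by the maximally mixed state on the output composed with the discarding effect on the input. Using that $\dagger_H$ sends the discarding effect to the supernormalised maximally mixed state $\mathds{1}$, and conversely sends the maximally mixed state $\tfrac{1}{d}\mathds{1}$ to $\tfrac{1}{d}$ times the discarding effect, the noise term is carried to a strictly positive multiple of the noise term with input and output interchanged.

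Putting this together, $\dagger_H(\mathcal{E}_\epsilon) = \dagger_H(\mathcal{E}) + \epsilon'\,(\text{noise})$ with $\dagger_H(\mathcal{E})$ again CP and $\epsilon'>0$; this is once more a noisy process, so $\dagger_H$ preserves $\QCalc|_{noise}$. Since the noise term is \emph{internal} to the cone of CP maps (in the sense invoked just above the statement), any representative can be taken with a strictly positive noise component, and no zero-scalar pathologies intervene. I would then conclude, exactly as in Proposition~\ref{prop:TimeSym}, that $\dagger_H$ descends to the quotient and furnishes a dagger for $\QCalc|_{noise}/\!\!\sim$, establishing time symmetry.

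The main obstacle I anticipate is this middle step: being careful that the noise term is genuinely mapped to a noise term of the correct type with a strictly positive, not merely non-negative, coefficient, so that the image really lands in the interior of $\QCalc|_{noise}$ rather than on its boundary. The bookkeeping of the dimension factors $d$ under $\dagger_H$ is routine, but it must be carried out to confirm $\epsilon'>0$ and hence that closure under the adjoint actually holds.
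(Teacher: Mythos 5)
Your proof is correct and takes essentially the same route as the paper: the paper gives no explicit proof here, simply asserting (exactly as in Proposition~\ref{prop:TimeSym}) that the Hermitian adjoint $\dagger_H$ of $\QCalc$ descends to a dagger on the quotient since it preserves equivalence classes. Your extra step --- checking that $\QCalc|_{noise}$ is closed under $\dagger_H$ because the adjoint carries the noise term to a strictly positive multiple of the reversed noise term --- is precisely the detail the paper leaves implicit, and it holds for the reason you give.
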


As per our current definition, however, it is not time-neutral as it does not have cups \& caps, nor does it even have identities or swaps! It is simple to see this as the cups \& caps, identities, and swaps in $\QCalc$ are not processes of the form of Eq.~\eqref{eq:noisyprocess}.
\begin{remark}
As discussed above, in $\QCalc|_{noise}$ we do not have identity processes. From a categorical point of view, this would mean that we do not have identity morphisms.
However, we note that identity morphisms can be freely re-introduced, without changing the semantics of the theory: hence, process theories as we present them are equivalent to the symmetric monoidal categories used by previous literature.
\end{remark}
 We can however freely add all of these \emph{wiring processes} back into the theory without changing anything that we have discussed so far.
 \begin{definition}[$\QSym$]
 We define $\QSym$ as the process theory generated by 
 \beq
 \left\{\tikzfig{noisyProcess1New},\ \  \tikzfig{wirings} \right\}
 \eeq
 and quotiented by arbitrary scalars.
 \end{definition}
Note that we do not need to restrict to quotienting by scalars $r>0$ as by construction all scalars will satisfy this condition as we see in the following.
\begin{theorem}
$\QSym$ is a deterministic and time-neutral process theory.
\end{theorem}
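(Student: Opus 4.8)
The plan is to treat the two claims separately: time-neutrality will be inherited almost directly from the compact-closed structure of $\QCalc$ already exploited for $\CPMTN$, whereas determinism is the substantive part and amounts to controlling the scalar monoid of $\QSym$. Throughout I would work with the underlying (un-quotiented) theory generated by the noisy processes of Eq.~\eqref{eq:noisyprocess} together with the wiring processes, and pass to equivalence classes only at the end, since composition of classes is computed on arbitrary representatives exactly as in Eq.~\eqref{eq:compEquiv}. I would also note in passing that the process-theory axioms hold by construction, the unit issue being handled by the fact that identities and swaps are among the generators (cf.\ the remark that wiring processes may be freely re-introduced).

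For time-neutrality, I would observe that cups and caps are literally among the chosen generators, so it only remains to check that they satisfy Eq.~\eqref{eq:snake} in $\QSym$. This follows exactly as in the proof of Prop.~\ref{prop:TimeNeut}: the snake and symmetry equations hold on the nose for the $\QCalc$ cups and caps, and since composition of equivalence classes is defined on arbitrary representatives, any equation valid in $\QCalc$ descends to the quotient. Hence $\QSym$ carries cups and caps and is time-neutral.

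For determinism I would compute the scalars, since by Def.~\ref{def:determinism} it suffices to exhibit a single scalar. The argument has two steps. First, every closed diagram built from the generators evaluates in $\QCalc$ to a \emph{strictly positive} real: wire loops made purely of cups, caps, identities and swaps contribute positive dimension factors $D$ (as in Eq.~\eqref{eq:badprob}), while any occurrence of a noisy box contributes a factor that cannot vanish, because noisy processes have full support. Here I would invoke the non-degeneracy property established earlier — that the discard--maximally-mixed pairing of a process is zero iff the process itself is zero — to conclude that the scalars of the generating theory lie in $(0,\infty)$ and never reach $0$. Second, once every scalar is a positive real, quotienting by arbitrary scalars identifies any two of them (if $s,t\in(0,\infty)$ then $s=(s\,t^{-1})\,t$ with $s\,t^{-1}$ itself a scalar), so there is a unique scalar class $\tilde 1$ and the empty diagram is the only scalar. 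This also retroactively justifies the remark preceding the theorem that quotienting by arbitrary scalars coincides with quotienting by positive scalars here, in contrast to $\CPMTN$ where the extra class $\tilde 0$ survives.

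The main obstacle is precisely the strict-positivity claim in the presence of cups and caps. Within $\QCalc|_{noise}$ alone every state and effect has full support, so any state--effect pairing is automatically positive; but bending wires can manufacture rank-deficient boundary operators such as the unnormalised Bell state, and a priori two such operators could be orthogonal and pair to zero. To close this gap I would show that the full-support (interior-of-the-cone, positive-definite Choi) property of the noisy generators is preserved under the compact-closed operations, so that any connected closed diagram reduces — via the snake equations of Eq.~\eqref{eq:snake} — to a dimension factor times the trace of a nonzero completely positive map applied to a full-rank positive input, which is strictly positive, with disconnected loops contributing only further positive dimension factors. Verifying that this reduction never strands one in a genuinely vanishing pairing is where the real work lies, and is exactly the point at which the condition $\epsilon>0$ in Eq.~\eqref{eq:noisyprocess} is indispensable.
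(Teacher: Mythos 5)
Your proposal is correct and shares the paper's overall skeleton: time-neutrality is read off from the cup/cap generators, and determinism is proved by decomposing an arbitrary closed diagram into disconnected pure-wiring loops (each a strictly positive dimension factor) and a connected core containing the noisy boxes, whose value must be shown to be strictly positive so that quotienting collapses every scalar to $\tilde{1}$. Where you genuinely differ is in the key lemma used for that positivity step. The paper proves an \emph{absorption} lemma: composing a wiring process with a noisy process yields another process of the noisy form of Eq.~\eqref{eq:noisyprocess} (a one-line diagram manipulation), so after absorbing all connected wirings the noisy core of a closed diagram is literally a scalar $r_\epsilon$ of $\QCalc|_{noise}$, which was already observed to lie in $(0,\infty)$; the ``genuinely vanishing pairing'' you worry about therefore never arises, by construction. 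You instead argue semantically, proposing to show that the full-rank-Choi (interior-of-the-cone) property of the noisy generators is preserved under the compact-closed operations. That route is also sound and completable --- bending a wire does not change the Choi operator, only the input/output partition, and composites and tensors of full-rank-Choi CP maps again have full-rank Choi --- and it is arguably more robust, since it does not depend on the precise syntactic form of the noise term. But it leaves you with the operator-theoretic verification you yourself flag as ``where the real work lies,'' which is precisely the step the paper's absorption observation renders trivial. In short: same decomposition and same quotienting endgame, but a different key lemma --- the paper's is the more economical, yours the more definition-independent.
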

\proof
Time neutrality follows immediately from the existence of the cups \& caps that we just added. We can then note that the set of noisy processes (i.e., of the form of Eq.~\eqref{eq:noisyprocess}) are closed under composition with wiring processes, e.g.:
\begin{align}
\tikzfig{figures/noisyProcessCup}\quad &= \quad \tikzfig{figures/noisyProcessCup2} \\
 &= \quad \tikzfig{figures/noisyProcessCup3}.
\end{align}
If we have some generic diagram, we can therefore always absorb these wiring processes into noisy processes unless the wiring processes are disconnected such as:
\beq
\tikzfig{figures/noisyProcess1}\quad\tikzfig{figures/wiring}.
\eeq
A general scalar (i.e., closed diagram) can therefore be written as some scalar $r_\epsilon$ in $\QCalc|_{noise}$ composed in parallel with some closed wiring, for example:
\beq
\tikzfig{figures/noisyScalar}\quad \tikzfig{figures/closedWiring}
\eeq
One can then note that such closed wiring is simply the parallel composition of closed loops and, as such closed loops are equal to the dimension of the system, this is a strictly positive number $w$. Hence, $r_\epsilon w > 0$ and when we quotient we obtain the scalar $\tilde{1}$. We have therefore shown that there is a unique scalar in the theory.
\endproof

\subsection{Connections to higher order processes}\label{sec:HigherOrder}

 Recent work has connected the study of process matrices \cite{oreshkov2012quantum} to the time-symmetric formulation of quantum theory  \cite{silva2017connecting}. Here we show the utility of our diagrammatic formalism by re-deriving part of this result (that time-symmetric quantum theory contains all process-matrices) in a particularly simple way.

\begin{proposition}
Time-symmetric quantum theory contains all process-matrices \cite{silva2017connecting}.
\end{proposition}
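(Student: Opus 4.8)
The plan is to read a process matrix for what it is: a positive semidefinite operator $W$ on the tensor product of the local input and output spaces, equipped with the (renormalised) generalised Born rule that specifies its action on a tuple of local operations $(\mathcal{A},\mathcal{B},\dots)$. The crucial observation is that $\QCalc$ takes \emph{arbitrary} positive operators as the states of a quantum system, so $W$ is literally a state of the composite system $\bigotimes_i \bigl(A_I^{(i)} \otimes A_O^{(i)}\bigr)$ in $\QCalc$, and hence, after passing to the quotient, a state of that system in $\CPMTN$. The first step is therefore simply to present $W$ as this $\QCalc$-state; no normalisation constraint needs to be checked at this stage, since positivity is all that is required for an object to be a state.

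Second, I would invoke Proposition~\ref{prop:TimeNeut}: because $\CPMTN$ is time-neutral it carries cups \& caps for every system, and these let us bend the legs of the state $W$ into ``slots''. Concretely, each party's operation $\mathcal{A}$ is a CP map from its input wire to its output wire; bending the $A_I$-leg of $W$ with a cup and the $A_O$-leg with a cap turns the pair of legs into a receptacle into which $\mathcal{A}$ is plugged, and similarly for every other party. This is nothing more than map--state duality (the Choi--Jamio\l{}kowski isomorphism) drawn diagrammatically: the contraction of $W$ against the Choi operators of the local maps is exactly the diagram obtained by inserting each local operation into the corresponding bent pair of legs. The higher-order process implementing $W$ is thus the bent state itself.

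Third, I would verify that closing up this diagram reproduces the generalised Born rule defining $W$. Evaluated in $\QCalc$, the resulting closed diagram is the positive scalar $\mathrm{tr}\!\bigl[W(\mathcal{A}\otimes\mathcal{B}\otimes\cdots)\bigr]$, up to the conventional transposes carried by the quantum cups \& caps discussed around Eq.~\eqref{eq:badprob}. Passing to $\CPMTN$ identifies this scalar only up to a positive factor, so the automatic renormalisation built into the quotient delivers precisely the renormalised Born rule of the time-neutral theory; positivity of $W$ guarantees the renormalised numbers are genuine probabilities. Since $W$ is realised by a diagram that lives inside $\CPMTN$ and no constraint of the theory is violated, the theory contains every process matrix, recovering the result of \cite{silva2017connecting}.

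The main obstacle I expect is purely the bookkeeping of step two: keeping track of which leg is an input and which an output, and of the transposition/conjugation conventions the quantum cups \& caps introduce, so that the bent state acts on the local operations exactly as the process-matrix formalism prescribes rather than on their transposes or adjoints. Once the wire-bending is fixed consistently, the remaining ingredients --- positivity making $W$ a legitimate $\QCalc$-state, and the quotient absorbing normalisation --- are immediate from the results already established in this section, which is what makes the re-derivation ``particularly simple''.
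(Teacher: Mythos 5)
Your proof is correct in substance, and it rests on the same essential mechanism as the paper's: realise the process matrix as a first-order object of the theory and then use the cups \& caps guaranteed by time-neutrality to bend wires into slots. But your entry point is genuinely different. The paper works from its own definition of a process matrix as a higher-order map sending pairs of CPTP maps to CPTP maps: it plugs \emph{swaps} into both slots, observes that the result $\mathcal{W}$ is by definition a CPTP map and hence already in the theory, and then recovers the action of $W$ on arbitrary local operations by bending the wires of $\mathcal{W}$. You instead start from the operator (Choi) characterisation standard in the process-matrix literature --- $W$ as a positive semidefinite operator --- note that any positive operator is literally a state of $\QCalc$ and hence of $\CPMTN$, and bend its legs to produce the slots. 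The two first-order avatars ($\mathcal{W}$ and the Choi state $W$) are themselves related by cups \& caps, so the proofs are two sides of the Choi--Jamio\l{}kowski coin. What the paper's route buys is independence from the operator characterisation: it never needs positivity of $W$ as an operator (which, under the paper's definition, would require a separate argument about complete positivity of the supermap), only the definitional fact that CPTP inputs give a CPTP output. What your route buys is contact with the literature's actual definition and an explicit check that the renormalised Born rule is reproduced, which the paper omits.

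One imprecision in your third step: you say the closed diagram evaluates in $\CPMTN$ to a scalar that the quotient renormalises. But the scalars of $\CPMTN$ are only $\{\tilde{0},\tilde{1}\}$, so a fully closed diagram carries no probability information at all. As the paper stresses, probabilistic predictions in $\CPMTN$ live in \emph{classical states}: to recover the generalised Born rule you must keep the classical outcome wires of the local instruments open, so that the composite diagram is an unnormalised joint distribution over outcomes, whose equivalence class under the quotient is the renormalised distribution. This is a bookkeeping fix, not a flaw in the strategy, and it does not affect the core claim that every process matrix is contained in the theory.
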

\begin{proof}
This was originally proven in  \cite{silva2017connecting}, here we reprove this result using our graphical formalism.

A process matrix is a higher-order linear map, $W$, that acts on a pair of CPTP maps, $A$ and $B$, to give a new CPTP map:
\beq\tikzfig{figures/ProcessMatrixDefinition}\eeq
The question we want to answer is whether such a map can be implemented in our time-symmetric formalism. To see this is indeed possible we will use the cups \& caps that we introduced earlier. Consider the special case in which $A$ and $B$ are both swaps. The definition of the process matrices implies that this must be a CPTP map which we will denote $\mathcal{W}$:
\beq\tikzfig{figures/ProcessMatrixSwap}\eeq
As all CPTP maps are in our formalism then, in particular, $\mathcal{W}$ is. It is now simple to show, using cups \& caps, that this means that $W$ itself is in our formalism:
\beq\tikzfig{figures/ProcessMatrixInTimeSymmetric}\eeq 
The LHS is a valid diagram in our theory as it is constructed from $\mathcal{W}$, which we argued is a CPTP map, and cups \& caps all of which are valid processes, the first equality then follows from the definition of $\mathcal{W}$, and the second equality from the definition of cups and caps. This means that the RHS is also a valid part of our formalism. 
\end{proof}

Such reasoning can also be applied to more general higher-order processes \cite{kissinger2017categorical,bisio2019theoretical} as they can all be viewed as particular classes of CP maps between quantum systems, what varies is how they are `normalised' and how they can be composed (in particular to preserve the normalisation constraints). As our formalism does not impose any constraints on normalisation and does not impose any constraints on composition, it seems clear that it should be able to simulate any such higher-order process. We, however, will leave the details and explore the consequences of this in future work.
\begin{problem}
Does $\QSym$ subsume the frameworks for higher-order processes of  \cite{kissinger2017categorical,bisio2019theoretical}? Moreover, can $\QSym$ be viewed as a higher-order process theory in the sense of  \cite{wilson2021causality}?
\end{problem}

\subsection{Beyond quantum theory}

So far in this section, we have focused on the example of quantum theory, in particular, with the aim of constructing our deterministic time-neutral quantum theory $\QSym$. It is clear, however, that a great deal of this is much more broadly applicable.

In particular, the key construction of quotienting by non-zero scalars makes sense for any process theory in which the scalars are $\mathds{R}^+$. This includes arbitrary generalised probabilistic theories (GPTs) \cite{HardyAxiom,Barrett}, operational probabilistic theories (OPTs) \cite{Chiri1,Chiri2,d2017quantum}, categorical probabilistic theories \cite{gogioso2017categorical},  and certain kinds of causal-inferential theories \cite{schmid2020unscrambling}. Moreover, it can also hold when the scalars are from some arbitrary commutative semirings with no zero divisors, and hence apply to many of the more exotic theories found in, for example,   \cite{gogioso2017fantastic}. The only subtlety to watch out for here, is that to obtain anything interesting in any of these cases we need to go beyond the causal processes (i.e., the analogue of $\QPhys$), and, instead, quotient the supertheory which does not impose any causality constraint (i.e., the analogue of $\QCalc$).

If one suitably takes care of this issue, then what is not  obvious, however, is whether or not these alternative theories will result in a time-neutral theory or not. It is therefore interesting to investigate what properties a physical theory, $\mathbf{G}$, must satisfy so that $\mathbf{G}/\!\!\sim$ is a time-neutral theory.

Exploring the properties of time-neutral physical theories is, therefore, an important direction for future research, in particular, one can ask whether we can find a set of principles that singles out $\QSym$ from the space of all possible time-neutral theories.
Or, in other words, can we find a reconstruction of time-symmetric quantum theory from well-motivated principles? The reconstruction of  \cite{reconstruction} may well be a useful starting point for this, as it introduces a time-symmetric version of the purification postulate which was itself used to great effect in the reconstruction of  \cite{Chiri2}.

\subsection{Discussion}

By taking a process-theoretic approach to the problem of time neutrality we have subsumed seemingly different approaches to the problem, such that they are all simply making different choices of the convention of the same underlying physical theory. Moreover, we have shown that this can be elegantly captured via a particular type of quotienting of the theory. However, there are also many more  dividends to have taken this approach:
\begin{itemize}
\item Arguably this is a much simpler presentation of the theory, at least, for those that are already familiar with categorical quantum mechanics, and means that tools such as the ZX calculus \cite{CD2, coecke2021kindergarden} can now be applied to the study of time neutral theories.
\item The process-theoretic nature of the construction immediately lends itself towards generalisation towards alternative physical theories such as GPTs -- note that whilst the construction can be applied to any GPT it is not the case that we will always end up with a time-neutral theory, exploring the GPT principles that may imply time neutrality under this construction is an interesting direction for future research.
\item The process-theoretic nature also means that it should be more directly comparable to the frameworks for higher-order processes such as Refs.~\cite{kissinger2017categorical,
bisio2019theoretical,
wilson2021causality}. Section \ref{sec:HigherOrder} provides a simple proof of principle application in this direction.
\item This approach may well also more readily extend beyond the finite-dimensional setting which has so far dominated the literature on time-symmetric theories, in particular, by considering the categorical formalism of  \cite{StefanoInfinite, shaikh2022feynman}.
\item It also can be considered in the context of the functorial approach to field theories presented in   \cite{gogioso2020functorial}. This may provide a formal way to study how time-neutral theories behave in a space-time context, and, in particular, whether or not they lead to contradictions with the no-signalling principle.
\item Finally, one can ask whether it is possible to find a reconstruction of quantum theory that reproduces the time-neutral theory, $\QSym$, rather than the standard causal theory, $\QPhys$. Again the process-theoretic nature of our approach lends itself to this as a starting point that can be provided by recent process-theoretic reconstructions such as Refs.~\cite{reconstruction} and \cite{tull2018categorical}.
\end{itemize}
The biggest issue with the current formalism, however, is that whilst the theory is time-neutral, the presentation of the theory still involves an artificial division of the systems associated with a given process into input and output systems. In  \cite{operads} we show how this can be overcome by moving from a categorical formulation of process theories to an operadic formulation based on the work of Refs.~\cite{spivak2017string,patterson2021wiring}. 

\section*{Acknowledgements}
JHS would like to thank Tony Short for an interesting discussion that helped to shape the early stages of the project. JHS and MES would also like to thank Lucien Hardy for insightful discussions into his work on time symmetry in operational theories.  BC thanks Carlo Rovelli and Augustin Vanrietvelde for discussions on time symmetry. The authors also thank Arthur Parzygnat for helpful feedback on the work.

JHS and MES were supported by the Foundation for Polish Science (IRAP project, ICTQT, contract no. MAB/2018/5, co-financed by EU within Smart Growth Operational Programme).  BC and SG were supported by the John Templeton Foundation through grant 61466, The Quantum Information Structure of Spacetime (qiss.fr). The opinions expressed in this publication
are those of the authors and do not necessarily reflect the
views of the John Templeton Foundation.

All of the diagrams within this manuscript were prepared using Aleks Kissinger's TikZit.

\bibliographystyle{plain}
\bibliography{main}

\newpage
\appendix
\section{Diagrams for groups and their representations} \label{app:Groups}

Below is a recap of diagrammatic notation for group representations.
The correspondence between strongly complementary structures and finite-dimensional group algebras was introduced in \cite{coecke2012strong} for the Abelian case and in \cite{gogioso2017fully,gogioso2019generalised} for the general case.
Diagrammatic notation for group representations previously appeared in \cite{vicary2013topological,gogioso2015fourier}---in the context of quantum algorithms and the quantum Fourier transform---and was extensively discussed in \cite{gogioso2017thesis,gogioso2019dynamics}---in the context of quantum dynamics.

\subsection{Diagrams for groups}
We denote a group $G$ by a new kind of wire,
\beq
\tikzfig{GSystem},
\eeq
such that the elements of the group, $g\in G$ are states of this wire,
\beq
\tikzfig{GState}.
\eeq
In particular, we denote the identity element of the group as:
\beq
\tikzfig{GIdent}.
\eeq
We then introduce the defining operations for a group, namely, group multiplication and group inverse as:
\beq
\tikzfig{GComp} \quad \text{and} \quad \tikzfig{GInverse}
\eeq
respectively. These can be defined by their action on the group elements via:
\beq
\tikzfig{GCompCond1}=\tikzfig{GCompCond2} \quad\text{and}\quad \tikzfig{GInverseDef}=\tikzfig{GInverseDef1}.
\eeq
Various important properties of groups can then be elegantly captured using these diagrams such as associativity of multiplication,
\beq
\tikzfig{GCompAssoc}=\tikzfig{GCompAssoc1}, 
\eeq
that the identity element is the unit, 
\beq\tikzfig{GCompUnit}=\tikzfig{GSystem}=\tikzfig{GCompUnit2},\eeq
that the group inverse is idempotent,
\beq \tikzfig{GInverseSelfInverse}= \tikzfig{GSystem}, \eeq
and that multiplication is ``antisymmetric'',
\beq
 \tikzfig{GCompAntiSym}=\tikzfig{GCompAntiSym1}.
\eeq

To capture more of the group structure diagrammatically, we introduce a copy map and a discard map for these new systems which are denoted:
\beq
\tikzfig{GCopy}\quad \text{and}\quad \tikzfig{GDiscard}
\eeq
respectively. These too can be defined by their action on group elements via:
\beq
\tikzfig{GCopyDef} = \tikzfig{GCopyDef1}\quad \text{and}\quad \tikzfig{GDiscardDef}=\tikzfig{empty}.
\eeq
Important properties of these can also be captured diagrammatically such as: associativity of copy,
\beq
\tikzfig{GCopyAssoc1}=\tikzfig{GCopyAssoc2}, \eeq
that the discard is the counit for copy,
\beq \tikzfig{GCopyUnit1}=\tikzfig{GSystem}=\tikzfig{GCopyUnit2},\eeq
and that copying is symmetric,
\beq \tikzfig{GCopySym}=\tikzfig{GCopy},
\eeq
Finally, we can use the interaction of these new processes with the group multiplication and inverse in order to capture more structure of the group. In particular,
\beq
\tikzfig{GInteract1}=\tikzfig{GInteract2},
\eeq
ensures that the group inverse behaves as expected. Moreover, one can show that copying and multiplication form a bialgebra,
\beq
 \quad \tikzfig{GInteractBi1}=\tikzfig{GInteractBi2}, \eeq
 that group multiplication is causal,
 \beq\tikzfig{GInteractDisc1}=\tikzfig{GDiscard}\tikzfig{GDiscard},
 \eeq
 and that the inverse is copied by the copy operation,
 \beq \tikzfig{GInteractInv2} = \tikzfig{GInteractInv1}.
\eeq

\subsection{Group representations}
In this section, we define group representations on systems in $\QCalc$ (or $\QPhys$) as an interaction between the new systems $G$ and systems from $\QCalc$. Formally we can think of the systems $G$ as being particular (potentially infinite-dimensional) classical systems where the point distributions correspond to the group elements.

A \emph{causal} group representation of $G$ on some system $Q$ in $\QCalc$, is a process, $\pi$, of the form:
\beq
\tikzfig{GRep}
\eeq
such that the following equations are satisfied:
\beq\label{eq:CausGRep}
\tikzfig{GRepDef1}=\tikzfig{GRepDef2} \quad ,\quad \tikzfig{GRepDef3}=\tikzfig{GRepDef4}\quad\text{and}\quad \tikzfig{GRepCausal}=\tikzfig{GRepCausal1},
\eeq
The first two equations guarantee that this is a valid representation and the third guarantees that this is a \emph{causal} representation. Note that if we are working with $\QPhys$ rather than $\QCalc$ then this last condition is automatically satisfied.

If $Q$ is strictly quantum, that is, a system $\mathcal{H}$, then these causal representations are necessarily unitary representations, that is, they satisfy,
\beq\label{eq:unitRep}
\tikzfig{GUnitaryRep}\ \ =\ \ \tikzfig{GUnitaryRep1}\ \  :: \rho \mapsto U_g\rho U_g^\dagger.
\eeq
for all $g\in G$. To see this simply note that the axioms for group representations imply that the processes on the left-hand side are necessarily reversible, and reversible CPTP maps are necessarily unitary supermaps.

For finite-dimensional classical systems $\mathds{X}$ the only representation that we will typically use is the trivial representation, namely:
\beq
\tikzfig{GTrivialRep}=\tikzfig{GTrivialRep1}.
\eeq
For many groups this will in fact be the only possible representation for finite-dimensional classical systems.

We can compose representations of single systems to define representations of composite systems via:
\beq
\tikzfig{GCompRep}=\tikzfig{GCompRep1}.
\eeq
Thus, if we compose a quantum and a classical representation we end up with a representation:
\beq
\tikzfig{GCQRep} = \tikzfig{GCQRep1} = \tikzfig{GCQRep2}.
\eeq
That is, the quantum part of the composite system may transform non-trivially under the action of the group, but the classical part is left invariant. Finally, note that the representation on a trivial system is necessarily trivial, that is:
\beq
\tikzfig{GRepTrivial}=\tikzfig{GRepTrivial1}.
\eeq

\subsection{Intertwiners}

Next we introduce \emph{intertwiners}, i.e. processes in $\QCalc$ (or $\QPhys$) that are symmetric with respect to the group action. In particular, they are characterised by the covariance condition:
\beq
\tikzfig{GIntertwiner} = \ \ \tikzfig{GIntertwiner1}.
\eeq
Using that representations on the trivial system and classical systems are trivial representations, one can easily show that states and measurements are intertwiners if and only if they are invariant under the group action, namely:
\beq \label{eq:congRep}
\tikzfig{GStateInt}\ \  =\ \  \tikzfig{GStateInt1}
\qquad \text{and} \qquad
\tikzfig{GMeasInt}\ \  =\ \  \tikzfig{GMeasInt1}.
\eeq

\subsection{Dual systems and conjugate representations}

Finally, we introduce dual systems and their representations. Systems in $\QCalc$  have duals, which we will now explicitly denote with arrows. That is, a generic process is denoted by
\beq
\tikzfig{genericQPart}
\eeq
and given by a CP map 
\beq
\mathcal{E}: \mathcal{B[H]} \otimes \left(\bigoplus_{x\in \mathds{X}} \mathcal{B}[\mathds{C}]\right) \otimes \mathcal{B[H']^*} \otimes \left(\bigoplus_{x'\in \mathds{X'}} \mathcal{B}[\mathds{C}]\right)^* \to \mathcal{B[K]} \otimes \left(\bigoplus_{y\in \mathds{Y}} \mathcal{B}[\mathds{C}]\right) \otimes \mathcal{B[K']^*} \otimes \left(\bigoplus_{y'\in \mathds{Y'}} \mathcal{B}[\mathds{C}]\right)^*.
\eeq
That is when the arrow is pointing up we use the primal vector space, and when the arrow is pointing down we use the dual vector space.

It is straightforward to verify that this is time-symmetric and time-neutral in the sense of section~\ref{sec:duals}.

Given a representation on a system $\mathcal{Q}^\uparrow$ we define the conjugate representation, $\pi^*$ on the system $\mathcal{Q}^\downarrow$ as follows:
\beq
\tikzfig{GcongRep1} \quad := \quad \tikzfig{GCongRep}.
\eeq
Note that if $Q^\uparrow$ is classical, $\mathds{X}^\uparrow$, then just as the representations on $\mathds{X}^\uparrow$ were necessarily trivial so is the conjugate representation on $\mathds{X}^\downarrow$.

From this definition it is easy to see that cups \& caps are intertwiners, for composite representations on the inputs. That is,
\begin{align}\label{eq:capIntertwiner}
\tikzfig{GCapInter0}\quad &=  \quad\tikzfig{GCapInter}\quad =\quad \tikzfig{GCapInter1}\\ &=\quad \tikzfig{GCapInter2}\quad =\quad \tikzfig{GCapInter3}\\ &=\quad \tikzfig{GCapInter4}.
\end{align}
In the first equality, we use the assumption that the representation on the input is a composite of representations and the above definition of conjugate representation, in the second equality we are using the definition of cups \& caps, in the third we use the definition of a representation, in the fourth we use one of the key results about groups, and the final is again from the definition of a representation.

\end{document}